\documentclass[runningheads]{llncs}
\usepackage{graphicx,color,amssymb}
\usepackage{hyperref}

\usepackage{amsmath}
\usepackage{amssymb}
\usepackage{bbold}
\usepackage{centernot}
\usepackage{listings}
\usepackage{varwidth}
\usepackage{float}
\usepackage{wrapfig}
\usepackage[nocompress]{cite}

\newcommand\bdp{\textsc{BD-threshold}}

\newfloat{program}{thp}{lop}
\floatname{program}{Program}

\newenvironment{proof*}[1]
  {%
   \begin{proof}}
  {\end{proof}}

\newcommand{\alphabet}{\Sigma}
\newcommand{\setofmeasurables}{\mathcal F}
\newcommand\ind[1]{\mathbb{1}_{#1}}
\newcommand{\dm}{\ind{\ne}}
\newcommand\kanta{K_{\alpha}}

\newcommand\abra[1]{\langle #1 \rangle}
\newcommand\lab{\ell}
\newcommand\dist[1]{\mathit{Dist}(#1)}

\newcommand\mcal[1]{\mathcal{#1}}
\newcommand\lmc{\mathcal{M}}
\newcommand\Q{{\mathbb Q}}
\newcommand\cyl{{C}}

\newcommand\bd[1]{\mathit{bd}_\alpha}
\newcommand\bdd[1]{\mathit{bd}_{#1}}

\newcommand\cutout[1]{}

\marginparwidth=90pt

\newcommand\pathfin[2]{\mathit{Paths}_{#2}(#1)}
\newcommand\tv[1]{\mathit{tv}_{#1}}
\newcommand\tva{\tv{\alpha}}
\newcommand\ga{\Gamma_\alpha}

\newcommand\Reals{\mathbb R_{\ge 0}}

\usepackage[bold,full]{complexity}

\usepackage{listings}

\lstset{
	language=python,
	mathescape=true,
	keywordstyle=\sffamily,
        deletekeywords={val,fun},
	columns=flexible,
	identifierstyle=\itshape,
	tabsize=2,
	fontadjust=true,
    numbers=left,
    stepnumber=1,
    numberstyle=\tiny
}

\spnewtheorem{fact}{Fact}{\bfseries}{\itshape}

\begin{document}

\title{Bisimilarity Distances~for~Approximate Differential Privacy}%
\author{Dmitry Chistikov\inst{1}%
\and
Andrzej S. Murawski\inst{2}%
\and
David Purser\inst{1}%
}

\authorrunning{D. Chistikov et al.}
\institute{Centre for Discrete Mathematics and its Applications (DIMAP) \& \\Department of Computer Science, University of Warwick, UK \\
\and
Department of Computer Science, University of Oxford, UK \\
}

\maketitle              %
\begin{abstract}
Differential privacy is a widely studied notion of privacy for various models of computation.
Technically, it is based on measuring differences between probability distributions.
We study $\epsilon,\delta$-differential privacy in the setting of labelled Markov chains.
While the exact differences relevant to $\epsilon,\delta$-differential privacy are not computable
in this framework, we propose a computable bisimilarity distance that yields a sound technique for measuring $\delta$,
the parameter that quantifies deviation from pure differential privacy.
We show this bisimilarity distance is always rational, the associated threshold problem is in $\NP$,
and the distance can be computed exactly with polynomially many calls to an $\NP$ oracle.

\keywords{Bisimilarity distances  \and Kantorovich metric \and Differential privacy \and Labelled Markov chains \and Bisimulation \and Analysis of probabilistic systems.}
\end{abstract}

\section{Introduction}

Bisimilarity distances were introduced by \cite{desharnais2002metric,desharnais2004metrics}, as a metric analogue of classic probabilistic bisimulation~\cite{larsen1991bisimulation},
to overcome the problem that bisimilarity is too sensitive to minor changes in probabilities. 
Such robustness is highly desirable, because probabilistic automata arising in practice may often be based on approximate probability values, extracted or learnt from  real world data.

In this paper, we study the computation of bisimilarity distances related to differential privacy.
Differential privacy~\cite{dwork2006calibrating} is a security property that ensures that a small perturbation of the input leads to only a small perturbation in the output, so that observing the output makes it difficult to determine whether
a particular piece of information was present in the input.
A variant, $\epsilon$-differential privacy, considers the ratio difference (rather than the absolute difference) between probabilities.

We will be concerned with the more general concept of  $\epsilon,\delta$-differential privacy, also referred to as \emph{approximate differential privacy}.
The $\delta$ parameter allows one to assess to what degree $\epsilon$-differential privacy {(``pure differential privacy''}) was achieved. 
We will design a version of bisimilarity distance which will constitute a sound upper bound on~$\delta$, thus providing a reliable measure of security.

From a verification perspective, a natural question is how to analyse systems with respect to $\epsilon, \delta$-differential privacy.
We carry out our investigations in the setting where the systems are \emph{labelled Markov chains (LMC)},
abstractions of autonomous systems with probabilistic behaviour and under partial observability.
States of an LMC~$\lmc$ can be thought of as generating probability distributions on sets of traces, and these sets are taken to correspond to observable events.
Let $\lmc$ be a system, and suppose $s$ and $s'$ are two states (configurations) of $\lmc$. Then we will say that $s$ and $s'$ satisfy $\epsilon, \delta$-differential privacy if the distributions on traces from these states are sufficiently close.
We consider the following problem: given an LMC $\lmc$, states $s$ and $s'$, and a value of $\epsilon$, determine $\delta$ such that $s$ and $s'$ satisfy $\epsilon, \delta$-differential privacy.
Unfortunately,  the smallest of such $\delta$ is not computable~\cite{Kie18}, which motivates our search for upper bounds.

In the spirit of generalised bisimilarity pseudometrics~\cite{chatzikokolakis2014generalized}, our distance, denoted $\bd{\alpha}$, is based on the Kantorovich-style lifting of distance between states to distance between distributions. 
However, because the underpinning distances in our case turn out not to be metrics, the setting does not quite fit into the standard picture, which presents a technical challenge.
We discuss how the proposed distance may be computed, using techniques from linear programming, linear real arithmetic, and computational logic. 
Our first result is that the distance always takes on rational values of polynomial size with respect to the size of the LMC and the bit size of the probability values associated with transitions (Theorem~\ref{thm:polyrational}).

This is then used to show that the associated threshold problem (``is $\bd{\alpha}$ upper-bounded by a given threshold value for two given states?'') is in $\NP$ (Theorem~\ref{thm:bdp}).
Note that the distance can be approximated to arbitrary precision by solving polynomially many instances of the threshold problem. Finally, we show that the distance can be computed exactly in polynomial time, given an $\NP$ oracle (Theorem~\ref{thm:main}).
This places it in (the search version of) $\NP$, leaving the possibility of polynomial-time computation open.

\paragraph{\it\bf Related Work} Chatzikokolakis et al.~\cite{chatzikokolakis2014generalized} have advocated the development of Kantorovich pseudometrics,
instantiated with any metric distance function (rather than absolute value) in the context of differential privacy.
They did not discuss the complexity of calculating such pseudometrics, but asked whether it was possible  to extend their techniques to $\epsilon,\delta$-differential privacy.
Our paper shows the extent to which this can be achieved; the technical obstacle that we face is that our distances are not metrics. To the best of our knowledge, no complexity results on differential privacy for Markov chains have previously appeared in the literature, and we are the first to address this gap.

The computation of the standard bisimilarity distances has been the topic of a long running line of research~\cite{van2017probabilistic}, starting with approximation~\cite{van2007approximating}.
The distance was eventually determined to be computable in polynomial time using the ellipsoid method to solve an implicit linear program of exponential size \cite{chen2012complexity}. This technique turns out slow in practice and further techniques have been developed which are faster but do not have such strong complexity guarantees \cite{bacci2013fly,tang2016computing}.
Because of the two-sided nature of our distances, the main system of constraints that we introduce in our work involves a maximum of two quantities. This nonlinearity at the core of the problem prevents us from relying on the ellipsoid method and explains the gap between our $\NP$ upper bound and the polynomial-time algorithms of~\cite{chen2012complexity}.

Tschantz et al.~\cite{tschantz2011formal} first studied differential privacy using a notion similar to bisimulation, which was extended to a more general class of bisimulation relations by Xu et al.~\cite{xu2014metrics}.
{Both consider only $\epsilon$-differential privacy, i.e. ratio differences, but do not examine how these could be computed.}

An alternative line of research by Barthe et al.~\cite{barthe2012probabilistic} concerns formal mechanised proofs of differential privacy.
Recently, that direction has been related to coupling proofs~\cite{barthe2015relational} -- this still requires substantial effort to choose the coupling, although recent techniques 
have improved this~\cite{albarghouthi2017synthesizing}. We complement this line of research by taking an algorithmic verification-centred approach.

The remainder of the paper is arranged as follows. Section~\ref{sec:lmc} introduces the basic setting of labelled Markov chains.
In section~\ref{sec:dp}, we discuss $\epsilon,\delta$-differential privacy and in section~\ref{sec:bd} we define our distance.
Section~\ref{sec:kant} develops technical results on our extended case of Kantorovich lifting.
These are subsequently used in section~\ref{sec:comp} to underpin techniques for computing the relevant distances.

\section{Labelled Markov Chains\label{sec:lmc}}

Given a finite set $S$, let $\dist{S}$ be the set of probability distributions on $S$.
\begin{definition}
A \emph{labelled Markov chain} (LMC) $\lmc$ is a tuple $\abra{S,\alphabet,\mu,\lab}$,
where 
$S$ is a finite set of states, 
$\alphabet$ is a finite alphabet, 
$\mu: S\to \dist{S}$ is the transition function 
and $\lab:S\to\alphabet$ is the labelling function.
\end{definition}
Like in~\cite{van2017probabilistic,chen2012complexity,bacci2013fly,tang2016computing}, our definition features labelled states. Variations, such as transition labels, can be easily accommodated within the setting.
We also assume that all transition probabilities are rational, represented as a pair of binary integers. The bit sizes of these integers form part of the bit size of the representation $|\mcal{M}|$. We will often write $\mu_s$ for $\mu(s)$.

In what follows, we  study probabilities associated with   infinite sequences of labels 
generated by LMC's.
We specify the relevant probability spaces next using standard measure theory~\cite{bill86,baier2008principles}.
Let us start with the definition of cylinder sets.
\begin{definition}
A subset $\cyl\subseteq\alphabet^\omega$ is a \emph{cylinder set} if there exists $u \in \alphabet^\ast$ such that
$\cyl$ consists of all infinite sequences from $\alphabet^\omega$ whose prefix is $u$. We then write $\cyl_u$
to refer to $C$.
\end{definition}

Cylinder sets play a prominent role in measure theory in that their finite unions can be used as  a generating family (an algebra) 
for the set $\mcal{F}$ of  measurable subsets of $\alphabet^\omega$ (the cylindric $\sigma$-algebra).
What will be important for us is that any measure $\nu$ on $\mcal{F}$ is uniquely determined by its values on cylinder sets.
Next we show how to assign a measure $\nu_s$ on $\mcal{F}$ to an arbitrary state of an LMC.
We start with several auxiliary definitions.
\begin{definition}
Given $\lmc=\abra{S,\alphabet,\mu,\lab}$, let $\mu^+:S^+\to[0,1]$ and $\lab^+:S^+\to\alphabet{}^+$ be the natural 
extensions of $\mu$ and $\lab$ to $S^+$, i.e.
$\mu^+(s_0\cdots s_k) = \prod_{i=0}^{k-1} \mu(s_i)(s_{i+1})$ and $\lab^+(s_0\cdots s_k) = \lab(s_0)\cdots \lab(s_k)$, 
where $k\ge 0$ and $s_i\in S$ ($0\le i\le k$).
Note that, for any $s\in S$, we have $\mu^+(s)=1$.
Given $s\in S$, let $\pathfin{\lmc}{s}$ be the subset of $S^+$ consisting of all sequences that start with $s$.
\end{definition}

\begin{definition}
Let $\lmc=\abra{S,\alphabet,\mu,\lab}$ and $s\in S$. We define $\nu_s: \mcal{F} \to [0,1]$ to be the unique measure on $\mcal{F}$
such that for any cylinder $\cyl_u$ we have
\[
\nu_s(\cyl_u) = \sum\{ \,\mu^+(p)\,\,|\,\, p\in \pathfin{\lmc}{s},\, \lab^+(p) = u\, \}.
\]
\end{definition}
Our aim will be to compare states of labelled Markov chains from the point of view of differential privacy.
Note that two states $s,s'$ can be viewed as  indistinguishable if $\nu_{s}=\nu_{s'}$.
If they are not indistinguishable then the difference between them can be quantified using the \emph{total variation distance}, defined by
$\tv{}(\nu,\nu') = \sup_{E\in \mcal{F}}  |\nu(E)-\nu'(E)|$. 
Given $\lmc=\abra{S,\alphabet,\mu,\lab}$ and $s,s'\in S$, we shall write $\tv{}(s,s')$ to refer to $\tv{}(\nu_s,\nu_{s'})$. 
\begin{remark}\label{rem:bad}
$\tv{}(s,s')$ turns out  surprisingly difficult to compute: it is undecidable whether the distance is strictly greater than a given threshold,
and the non-strict  variant of the problem (``greater or equal'') is not known to be decidable~\cite{Kie18}.
\end{remark}
To measure probabilities relevant to differential privacy, we will need to study a more general variant $\tv{\alpha}$ of the above distance, which we introduce next.

\section{Differential Privacy\label{sec:dp}}
Differential privacy is a mathematical guarantee of privacy due to Dwork et al~\cite{dwork2006calibrating}. 
It is a property similar to non-interference: the aim is to ensure that inputs which are related in some sense lead to very similar outputs. 
The notion requires that for two related states there only ever be a small change in output probabilities, and therefore discerning the two is difficult, which maintains the privacy of the states. 
Below we cast the definition in the setting of labelled Markov chains.
\begin{definition}\label{def:dp}
Let $\lmc=\abra{S,\alphabet,\mu,\lab}$ be a labelled Markov chain and let $R\subseteq S\times S$ be a symmetric relation.
Given $\epsilon\ge 0$ and $\delta\in [0,1]$, we say that
$\lmc$ is $\epsilon, \delta$-differentially private (wrt $R$) if, for any $s,s'\in S$ such that $(s,s')\in R$, we have
\[
\nu_s(E) \leq e^{\epsilon}\cdot \nu_{s'}(E) + \delta
\]
for any measurable set $E\in\mcal{F}$.
\end{definition}
\begin{remark}
Note that each state $s\in S$ can be viewed as defining a random variable $X_s$ with outcomes from $\alphabet^\omega$ such that $P(X_s\in E) = \nu_s(E)$. Then the above can be rewritten as $P(X_s \in E) \le e^\epsilon\, P(X_{s'} \in E) + \delta$, 
which matches the definition from~\cite{dwork2006calibrating}, where one would consider $X_s, X_{s'}$ neighbouring in some natural sense. 
\end{remark}
The above formulation is often called \emph{approximate differential privacy}. For $\delta=0$, one talks about (pure) \emph{$\epsilon$-differential privacy}.
Note that then the above definition boils down to measuring the ratio between the probabilities of possible outcomes.
$\delta$ is thus an indicator of the extent to which $\epsilon$-differential privacy holds for the given states.
Intuitively, one could interpret $\epsilon,\delta$-differential privacy as ``$\epsilon$-differential privacy with probability at least $1-\delta$"~\cite{V17}.
Our work is geared towards obtaining sound upper bounds on the value of $\delta$ for a given $\epsilon$.

\begin{remark}
What it means for two states to be related (as specified by $R$) is to a large extent domain-specific. In general, $R$ makes it possible to spell out which
states should not appear too different and, consequently, should enjoy a quantitative amount of privacy.
In the typical database scenario, one would relate database states that differ by just one person.
In our case, we refer to states of a machine, 
for which we would like it to be indiscernible as to which was the start state (we assume the states are hidden and the traces are observable).
\end{remark}
To rephrase the inequality underpinning differential privacy in a more succinct form, it will be convenient to work with the \emph{skewed distance} $\Delta_\alpha$,
first introduced by Barthe et al~\cite{barthe2012probabilistic} in the context of Hoare logics and $\epsilon,\delta$-differential privacy.
\begin{definition}[Skewed Distance]
\label{def:skd}
For $\alpha \geq 1$, let $\Delta_\alpha: \Reals\times\Reals \to\Reals$ be defined by 
$\Delta_\alpha(x,y) = \max\{x - \alpha y, \,y - \alpha x, \, 0\}$.
\end{definition}
\begin{remark}
It is easy to see that $\Delta_\alpha$ is anti-monotone with respect to $\alpha$. In particular, because $\alpha\geq 1$, we have $\Delta_\alpha(x,y) \le \Delta_1(x,y)= |x-y|$.
Observe that $\Delta_2(9,3)=9-2\times 3 = 3$, $\Delta_2(9,6) = 0$ and $\Delta_2(6,3)=0$.
Note that $\Delta_2(x,y)=0$ need not imply $x=y$, i.e. $\Delta_2$ is not a metric.
Note also that the triangle inequality may fail: $\Delta_2(9,3) > \Delta_2(9,6)+\Delta_2(6,3)$, i.e. $\Delta_2$ is not a pseudometric\footnote{A pseudometric must satisfy $m(x,x)=0$, $m(x,y)=m(y,x)$ and $m(x,z)\le m(x,y)+m(y,z)$. 
For metrics, one additionally requires that $m(x,y)=0$ should imply $x=y$.}.
This will complicate our technical development, because we will not be able to use the framework of~\cite{chatzikokolakis2014generalized} directly.
\end{remark}
The significance of the skewed distance will be seen shortly in Fact~\ref{f:defeqdp}. We first introduce the skewed analogue of the total variation distance called $\tv{\alpha}$, for which $\tv{}$ is a special case ($\alpha=1$).
\begin{definition}
\label{def:tva}
Let $\alpha\ge 1$. Given two measures $\nu,\nu'$ on $(\alphabet^\omega, \mcal{F})$, let
\[
\tv{\alpha}(\nu,\nu') = \sup_{E \in \mcal{F}} \Delta_\alpha(\nu(E),\nu'(E)).
\]
\end{definition}
Following the convention for $\tv{}$, $\tv{\alpha}(s,s')$ will stand for $\tv{\alpha}(\nu_s,\nu_{s'})$.
Fact~\ref{f:defeqdp} is an immediate corollary of Definitions~\ref{def:dp}, \ref{def:skd}, and~\ref{def:tva}.
\begin{fact}
\label{f:defeqdp}
$\mcal{M}$ is $\epsilon,\delta$-differentially private wrt $R$ if and only if, for all $s,s'\in S$ such that $(s,s')\in R$, we have $\tv{\alpha}(s,s') \leq \delta$, where $\alpha = e^\epsilon$.
\end{fact}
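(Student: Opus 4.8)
The plan is to fix an arbitrary related pair $(s,s')\in R$ and show that the single two-sided inequality $\tv{\alpha}(s,s')\le\delta$ is equivalent to the conjunction of the two one-sided differential-privacy inequalities arising from the ordered pairs $(s,s')$ and $(s',s)$. First I would unfold Definition~\ref{def:tva} together with Definition~\ref{def:skd}: since a supremum over $E\in\mcal{F}$ is bounded by $\delta$ exactly when each individual value is, the bound $\tv{\alpha}(s,s')\le\delta$ holds if and only if $\Delta_\alpha(\nu_s(E),\nu_{s'}(E))\le\delta$ for every measurable $E$.

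Next I would unpack the max defining $\Delta_\alpha$. Because $\delta\ge 0$ (as $\delta\in[0,1]$), the constant $0$ in the max never exceeds $\delta$, so $\Delta_\alpha(\nu_s(E),\nu_{s'}(E))\le\delta$ reduces to the pair of scalar inequalities $\nu_s(E)-\alpha\,\nu_{s'}(E)\le\delta$ and $\nu_{s'}(E)-\alpha\,\nu_s(E)\le\delta$. With $\alpha=e^\epsilon$, these are precisely the differential-privacy inequalities of Definition~\ref{def:dp} applied, respectively, to the ordered pairs $(s,s')$ and $(s',s)$.

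The only point requiring care is reconciling the quantifier structure: the definition of $\epsilon,\delta$-differential privacy ranges over all ordered pairs in $R$ and imposes a one-sided inequality for each, whereas $\Delta_\alpha(\nu_s(E),\nu_{s'}(E))$ simultaneously records both one-sided gaps. Here I would invoke the standing assumption that $R$ is \emph{symmetric}: for each related pair, both $(s,s')$ and $(s',s)$ belong to $R$, so quantifying the one-sided privacy inequality over all of $R$ yields exactly the two-sided condition for every such pair, and conversely the bound $\tv{\alpha}(s,s')\le\delta$ supplies the one-sided inequality for each ordered pair. Combining these equivalences across all related pairs and all measurable $E$ delivers the stated biconditional.

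I do not anticipate a genuine obstacle, as the argument is a direct unfolding of the three definitions. The only subtleties worth flagging are the reliance on $\delta\ge 0$ to discard the zero term inside the max, and the use of the symmetry of $R$ to match the inherently one-sided privacy condition with the two-sided distance $\Delta_\alpha$; without symmetry the equivalence would fail in one direction.
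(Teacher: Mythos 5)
Your proposal is correct and matches the paper, which presents this fact as an immediate corollary of Definitions~\ref{def:dp}, \ref{def:skd}, and~\ref{def:tva} without an explicit proof; your unfolding (bounding the supremum pointwise, splitting the max using $\delta\ge 0$, and using the symmetry of $R$ to match the one-sided privacy inequalities with the two-sided $\Delta_\alpha$) is exactly the intended argument.
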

Some values of $\tv{\alpha}$ are readily known. For instance, the distance between any bisimilar states turns out to be zero.
\begin{definition}
A probabilistic bisimulation on an LMC $\lmc=\abra{S,\alphabet,\mu,\lab}$  is 
an equivalence relation $R\subseteq S\times S$ such that  if $(s,s') \in R$ then $\lab(s)=\lab(s')$ and for all $X \in S/R$, $\sum_{u\in X}\mu(s)(u) =\sum_{u\in X} \mu(s')(u)$,
i.e. related states have the same label and probability of transitioning into any given  equivalence class.
\end{definition}
It is known that probabilistic bisimulations are closed under union and hence there exists a largest one, written $\sim$ and called \emph{probabilistic bisimilarity}.
Two states are called \emph{bisimilar}, written $s\sim s'$, if $(s,s')\in \sim$. Equivalently, this means that the pair $(s,s')$ belongs to a probabilistic bisimulation.
It follows from~\cite[Proposition 9, Lemma 10]{chen2012complexity}, that for bisimilar $s,s'$, we have $\tv{1}(s,s')=0$. As $\tva(s,s')\le \tv{1}(s,s')$ we obtain the following.
\begin{lemma}\label{lem:bis}
If $s\sim s'$ then $\tva(s,{s'}) =0$.
\end{lemma}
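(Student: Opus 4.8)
The plan is to obtain the statement as an immediate corollary of two ingredients already available: the anti-monotonicity of the skewed distance in $\alpha$, and the vanishing of the ordinary total variation distance on bisimilar states. There is essentially no new work to do beyond assembling these.

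First I would recall, as noted just before the statement, that it follows from~\cite[Proposition 9, Lemma 10]{chen2012complexity} that $\tv{1}(s,s')=0$ whenever $s\sim s'$. Concretely, bisimilar states assign the same measure to every measurable event, so $\nu_s(E)=\nu_{s'}(E)$ for all $E\in\mcal{F}$, and hence the supremum defining $\tv{1}(s,s')=\sup_{E\in\mcal{F}}|\nu_s(E)-\nu_{s'}(E)|$ is zero. Next I would invoke the inequality $\Delta_\alpha(x,y)\le\Delta_1(x,y)=|x-y|$, which holds for every $\alpha\ge 1$ by anti-monotonicity. Instantiating $x=\nu_s(E)$ and $y=\nu_{s'}(E)$ and then taking the supremum over $E\in\mcal{F}$ yields $\tva(s,s')\le\tv{1}(s,s')$. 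Combining the two steps gives $\tva(s,s')\le 0$. Since $\Delta_\alpha(x,y)\ge 0$ by construction (the defining maximum includes the term $0$), the supremum $\tva(s,s')$ is non-negative, and therefore $\tva(s,s')=0$.

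I do not expect any genuine obstacle: the lemma is a direct consequence of the definitions together with the cited result. The only mildly substantive ingredient is the fact that bisimilar states are measure-equivalent, and that is precisely what the reference supplies. Were one to reprove it from scratch, the argument would amount to showing that $\nu_s$ and $\nu_{s'}$ agree on cylinder sets---by induction on the length of the defining prefix, using that related states share a label and have equal transition probability into each equivalence class of $\sim$---and then extending this agreement from the generating algebra to all of $\mcal{F}$ by the uniqueness of the measure determined by its values on cylinders.
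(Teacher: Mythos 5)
Your proof is correct and follows essentially the same route as the paper: the text immediately preceding the lemma derives it by combining $\tv{1}(s,s')=0$ for bisimilar states (citing \cite[Proposition 9, Lemma 10]{chen2012complexity}) with the inequality $\tva(s,s')\le\tv{1}(s,s')$ coming from the anti-monotonicity of $\Delta_\alpha$ in $\alpha$. Your additional remarks on non-negativity and on how one would reprove measure-equivalence via cylinder sets are consistent elaborations, not a different argument.
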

In contrast to~\cite{chatzikokolakis2014generalized}, the converse will not hold.
\begin{example}\label{ex:kernel}
In the LMC shown in Figure~\ref{fig:eg:nonbisim}, states $s_0$ and $s_1$ are \emph{not} bisimilar.
To see this, observe first that  $s_2$ must be the only state in its equivalence class with respect $\sim$, because other states have different labels.
Now note that the probabilities of reaching $s_2$ from $s_0$ and $s_1$ respectively are different ($0.4$ vs $0.6$).

However,  for $\alpha=1.5$,  we have $\tv{\alpha}(s_0,s_1)=0$, because $\Delta_{\alpha}(0.6,0.4)= \max(0.6-1.5\cdot 0.4, 0.4-1.5\cdot 0.6, 0)=0$.
\begin{figure}[t]
\centering
\includegraphics[scale=0.4]{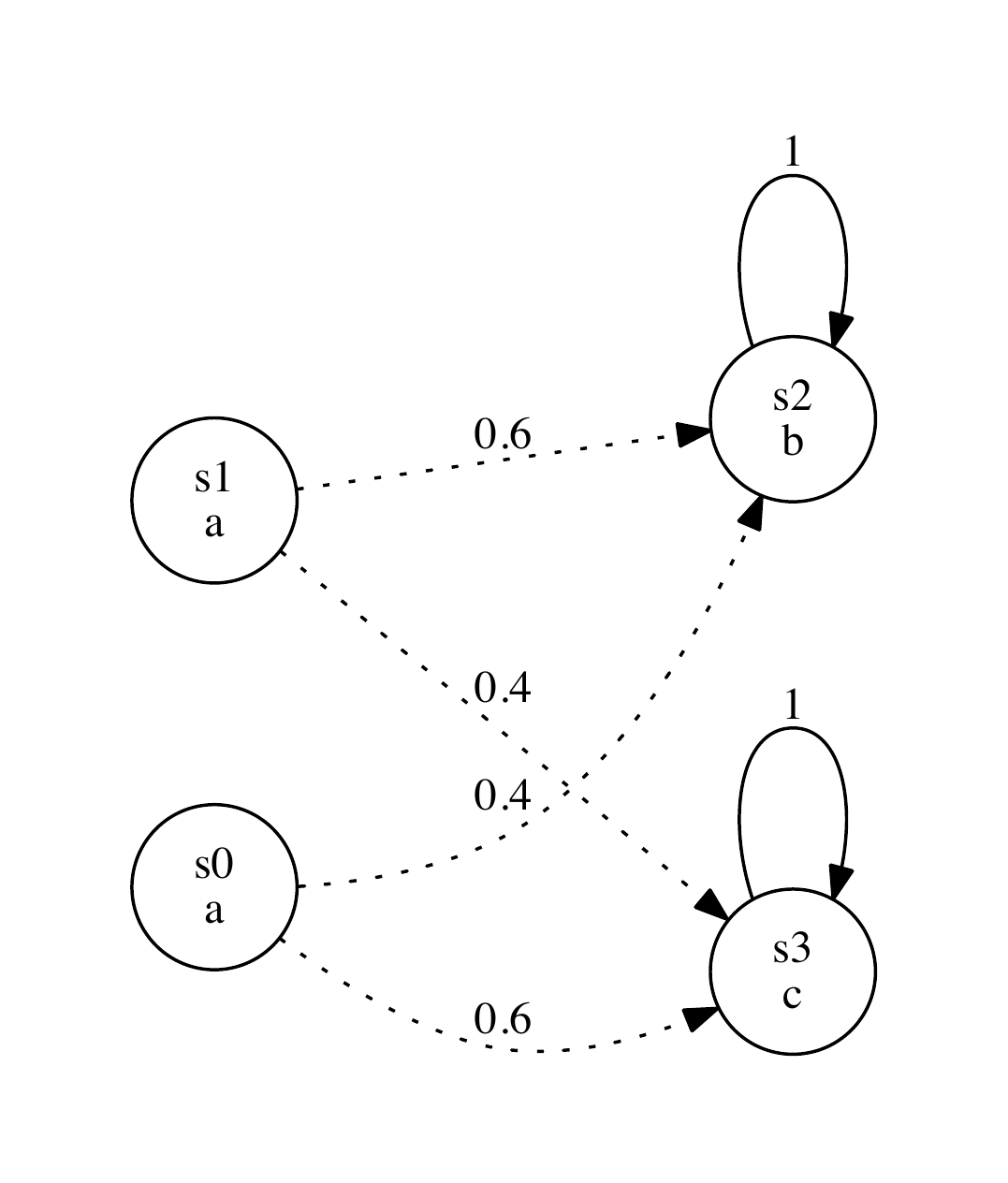}
\caption{States 1 and 2 are not bisimilar, but $\tv{1.5}(s_0,s_1) = 0$.}
\label{fig:eg:nonbisim}
\end{figure}
\end{example}
In an ``acyclic'' system, $\tv{\alpha}$ can be calculated by exhaustive search: the natural algorithm 
is doubly exponential, as one needs to consider all possible events over all possible traces.
However, in general, $\tv{\alpha}$ is not computable~(Remark~\ref{rem:bad}).
Thus, in the remainder of the paper, we shall introduce and study another distance $\bd{}$.
It will turn out possible to compute it and it will provide a sound method
for bounding $\delta$ for $\ln(\alpha),\delta$-differential privacy.  
Our main result will be Theorem~\ref{thm:main}: the new distance can be calculated in polynomial time, assuming an $\NP$ oracle. Pragmatically, this means that this new distance can be computed efficiently, assuming access to an appropriate satisfiability or theory solver.

\section{Skewed Bisimilarity Distance\label{sec:bd}}

Our distance will be defined in the spirit of bisimilarity distances~\cite{desharnais2002metric,desharnais2004metrics,chen2012complexity,chatzikokolakis2014generalized} through a fixed point definition
based on a variation of the Kantorovich lifting.
To motivate its shape, let us discuss how one would go about calculating $\tv{\alpha}$ recursively.
If $\lab(s)\neq \lab(s')$ then $\nu_s(C_{\lab(s)}) = 1$, $\nu_{s'}(C_{\lab(s)}) = 0$, therefore $\tva(s,s')=1$.
So, let us assume $\lab(s)=\lab(s')$. Given $E\subseteq\alphabet^\omega$ and $a\in \alphabet$, 
let $E_a = \{ w \in\alphabet^\omega\,|\, a w \in E\}$.  Then we have:
\begin{align*}
\tv{\alpha}(\nu_s,\nu_{s'}) &= \sup_{E \in \mcal{F}} \Delta_\alpha(\nu_s(E),\nu_{s'}(E))
\\ &= \sup_{E_{\lab{(s)}}\in \mcal{F}} \Delta_\alpha\big(\sum_{u\in S}\, \mu_s(u)\, \nu_u(E_{\lab(s)}),\sum_{u\in S}\mu_{s'}(u)\,\nu_u(E_{\lab(s)})\big).
\end{align*}
If we define $f: S\to [0,1]$ by $f(u) = \nu_u(E_{\lab(s)})$, this can be rewritten as
\[
\sup_{E_{\lab{(s)}}\in \mcal{F}} \Delta_\alpha\big(\sum_{u\in S}\, \mu_s(u)\, f(u),\sum_{u\in S}\mu_{s'}(u)\,f(u)\big).
\]
We have little knowledge of $f$, otherwise we could compute $\tv{\alpha}$, but  from the definition of $\tv{\alpha}$, we do know that $\Delta_\alpha(f(v),f(v')) \le \tv{\alpha}(v,v')$ for any $v,v'\in S$.
Consequently, the following inequality holds.
\begin{align*}
\tva(s,s') \le \sup_{\substack{f: S \to [0,1] \\ \forall v,v' \in S \Delta_\alpha(f(v), f(v')) \le \tv{\alpha}(v,v')}} \mkern-18mu\Delta_\alpha\big(\sum_{u\in S} \mu_s(u) f(u),\sum_{u\in S}\mu_{s'}(u)f(u)\big)
\end{align*}

The expression on the right is an instance of the Kantorovich lifting~\cite{Kan42,deng2009kantorovich}, which uses (``lifts'') the distance $\tv{\alpha}$ between states $s,s'$ to
define a distance between the distributions $\mu_s,\mu_{s'}$ associated with the states. We recall the definition of the Kantorovich distance between distributions 
in the discrete case, noting that then, for $\mu\in\dist{S}$,  we have $\int f d \mu = \sum_{u\in S} f(u) \,\mu(u)$.
\begin{definition}[Kantorovich] \label{def:kan}
Given $\mu,\mu' \in \dist{S}$ and  a pseudometric $m:S\times S\to[0,1]$, the \emph{Kantorovich distance} between $\mu$ and $\mu'$ is defined to be
$$ K(m)(\mu,\mu') = \sup_{\substack{f:S \to [0,1] \\ \forall v,v' \in S |f(v) - f(v')| \le m(v,v') }} \mkern-18mu\big|\int f d\mu - \int f d\mu'\big|.$$
\end{definition}
\begin{remark}
The Kantorovich distance is
 also known under other names (e.g. Hutchinson, Wasserstein distance), having been rediscovered several times in history~\cite{deng2009kantorovich}.
Chatzikokolakis et al. \cite{chatzikokolakis2014generalized} studied the Kantorovich distance and related bisimulation distances when the absolute value distance above is replaced with another metric.
For our purposes, instead of $|...|$, we need to consider $\Delta_\alpha$, even though $\Delta_\alpha$ is not a metric and $m$ may not be a pseudometric.
\begin{definition}[Skewed Kantorovich]\label{defn:sk}
Given $\mu,\mu' \in \dist{S}$ and a symmetric distance $d:S\times S\to[0,1]$, the \emph{skewed Kantorovich distance} between $\mu$ and $\mu'$ is defined to be
\[
\kanta(d)(\mu,\mu') = \sup_{\substack{f:S \to [0,1] \\ \forall v,v' \in S \ \Delta_\alpha(f(v), f(v')) \le d(v,v') }}\mkern-18mu\Delta_\alpha\big(\int f d\mu,\int f d\mu'\big)
\]
\end{definition}
\end{remark}
Note that setting $\alpha = 1$ gives the standard Kantorovich distance (Definition~\ref{def:kan}). Below we define a function operator, which will be used to define our distance.
\begin{definition}
Let $\ga: [0,1]^{S\times S}\to [0,1]^{S\times S}$ be defined as follows.
$$\Gamma_\alpha(d)(s,s') = \begin{cases} 
K_\alpha(d)(\mu_s, \mu_{s'}) & \lab(s) = \lab(t) \\ 
1 & \lab(s) \ne \lab(t)\end{cases}$$
\end{definition}
Note that $[0,1]^{S\times S}$ equipped  with the pointwise order, written $\sqsubseteq$, 
is a complete lattice and that $\Gamma_\alpha$ is monotone with respect that order (larger $d$ permit more functions, thus larger supremum).
Consequently, $\ga$ has a least fixed point~\cite{tarski1955lattice}. 
We take our distance to be exactly that point.
\begin{definition}[Skewed Bisimilarity Distance]
Let $\bd{\alpha}: S\times S\to [0,1]$ be the least fixed point  of $\ga$.
\end{definition}
\begin{remark}\label{rem:ufp}
Recall that the least fixed point is equal 
to the least pre-fixed point ($\min \{ d\,|\, \ga(d)\sqsubseteq d\}$).
\end{remark}
Recall our initial remarks about the Kantorovich distance $\kanta(\tva)(\mu_s,\mu_{s'})$ overapproximating $\tva(s,s')$.
They can be summarised by $\tva \sqsubseteq \kanta(\tva)$, i.e. $\tva$ is a post-fixed point of $\kanta$.
Since we want to bound $\tv{\alpha}$ as closely as possible, we can show that the least fixed point $\bd{\alpha}$ also bounds $\tva$ from above.
\begin{lemma}\label{lem:bounds}
$\tva \sqsubseteq \bd{\alpha}$.
\end{lemma}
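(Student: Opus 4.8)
The plan is to approximate $\tva$ from below by a sequence of finite-horizon distances, each of which can be bounded by $\bd\alpha$ through a single application of $\ga$. This extra work is genuinely needed: the overapproximation recalled just before the lemma only establishes $\tva\sqsubseteq\ga(\tva)$, i.e. that $\tva$ is a \emph{post}-fixed point of $\ga$, which by itself places $\tva$ below the \emph{greatest} fixed point rather than the least one $\bd\alpha$. The content of the lemma is therefore an argument from below, and the recursion in the excerpt only supplies the inductive engine.

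For $n\ge 0$, let $\mcal{F}_n$ denote the finite sub-$\sigma$-algebra of $\mcal{F}$ generated by the cylinders $\cyl_u$ with $|u|\le n$, and set
\[
\tva^{(n)}(s,s') \;=\; \sup_{E\in\mcal{F}_n}\Delta_\alpha(\nu_s(E),\nu_{s'}(E)).
\]
The base case is immediate: $\mcal{F}_0=\{\emptyset,\alphabet^\omega\}$, so the admissible values are $\Delta_\alpha(0,0)$ and $\Delta_\alpha(1,1)$, both $0$, and hence $\tva^{(0)}\sqsubseteq\bd\alpha$. For the inductive step I would prove $\tva^{(n+1)}\sqsubseteq\ga(\tva^{(n)})$, mirroring the recursive derivation given just before Definition~\ref{def:kan} but now with finite horizons. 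If $\lab(s)\ne\lab(s')$ then $\tva^{(n+1)}(s,s')\le 1=\ga(\tva^{(n)})(s,s')$. If $\lab(s)=\lab(s')=a$, then both $\nu_s$ and $\nu_{s'}$ are supported on traces starting with $a$, so for $E\in\mcal{F}_{n+1}$ only $E_a$ matters, and $E\mapsto E_a$ maps $\mcal{F}_{n+1}$ into $\mcal{F}_n$; writing $f(u)=\nu_u(E_a)$ we obtain $\nu_s(E)=\sum_{u\in S}\mu_s(u)f(u)$, likewise for $s'$, with $\Delta_\alpha(f(v),f(v'))\le\tva^{(n)}(v,v')$ for all $v,v'$. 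Thus $\Delta_\alpha(\nu_s(E),\nu_{s'}(E))$ is one of the values admitted in the supremum defining $\kanta(\tva^{(n)})(\mu_s,\mu_{s'})=\ga(\tva^{(n)})(s,s')$, and taking the supremum over $E\in\mcal{F}_{n+1}$ yields the claim. Combining this with the induction hypothesis, monotonicity of $\ga$, and the fixed-point equation $\ga(\bd\alpha)=\bd\alpha$ gives $\tva^{(n+1)}\sqsubseteq\ga(\tva^{(n)})\sqsubseteq\ga(\bd\alpha)=\bd\alpha$, so $\tva^{(n)}\sqsubseteq\bd\alpha$ for every $n$.

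It then remains to pass to the limit by showing $\tva=\sup_n\tva^{(n)}$, which immediately gives $\tva\sqsubseteq\bd\alpha$. The inequality $\ge$ is trivial since $\mcal{F}_n\subseteq\mcal{F}$. For $\le$, I would fix $E\in\mcal{F}$ and $\eta>0$: since the cylinders generate $\mcal{F}$ and $\nu_s,\nu_{s'}$ are finite measures, the generating algebra $\bigcup_n\mcal{F}_n$ is dense in $\mcal{F}$ for the pseudometric $F\mapsto(\nu_s+\nu_{s'})(E\triangle F)$, so there is some $E'\in\mcal{F}_n$ with $\nu_s(E\triangle E')$ and $\nu_{s'}(E\triangle E')$ both below $\eta$; as $\Delta_\alpha$ is $(1+\alpha)$-Lipschitz in its two arguments, $\Delta_\alpha(\nu_s(E),\nu_{s'}(E))\le\Delta_\alpha(\nu_s(E'),\nu_{s'}(E'))+(1+\alpha)\eta\le\sup_m\tva^{(m)}(s,s')+(1+\alpha)\eta$, and letting $\eta\to0$ closes the gap. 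I expect this measure-theoretic limit to be the main obstacle, since it is precisely what upgrades the cheap post-fixed-point bound to a bound by the \emph{least} fixed point, and it rests on the standard approximation of measurable sets by the generating algebra together with the Lipschitz continuity of $\Delta_\alpha$; the fixed-point induction itself is routine once the Kantorovich decomposition from the excerpt is reused at each finite horizon.
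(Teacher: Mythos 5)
Your proof is correct, and it follows the same skeleton as the paper's: stratify by a finite horizon, induct on the horizon using one application of $\ga$ per step, and then pass to the limit by approximating measurable events with the generating algebra of cylinders. Your opening observation --- that the post-fixed-point property $\tva\sqsubseteq\ga(\tva)$ only bounds $\tva$ by the \emph{greatest} fixed point, so a separate argument from below is needed --- is exactly the right diagnosis. Where you differ is in what gets stratified: the paper first replaces $\tva$ by a trace-level Kantorovich distance $K_\alpha(\dm)$ over $[0,1]$-valued functions on $\alphabet^\omega$ (Lemma~\ref{lem:appen:eqpart}, proved via simple functions and monotone convergence) and then inducts on the truncated distance $\dm^h$, whereas you stratify the supremum over \emph{events} directly via the finite algebras $\mcal{F}_n$. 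This buys you two simplifications: your base case is trivial ($\mcal{F}_0$ admits only measures $0$ and $1$, so $\Delta_\alpha$ vanishes), while the paper needs the ad hoc Lemma~\ref{lem:expectzero} to handle the fact that $\Delta_\alpha(f(t),f(t'))\le 0$ does not force $f$ to be constant when $\Delta_\alpha$ is not a metric; and you never need the reduction to indicator functions at all, since the only function you feed into $\kanta$ is $u\mapsto\nu_u(E_a)$ on states. Your limit step (simultaneous approximation of $E$ by the algebra $\bigcup_n\mcal{F}_n$ for the two measures $\nu_s,\nu_{s'}$, combined with $(1+\alpha)$-Lipschitz continuity of $\Delta_\alpha$) is the same content as the paper's Lemmas~\ref{lem:genset} and~\ref{lem:indenough}. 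The paper's extra generality in carrying $[0,1]$-valued $f$ is not needed for this lemma, so your route is arguably the cleaner one; the trade-off is only that the paper's intermediate object $K_\alpha(\dm)$ gives an alternative characterisation of $\tva$ that it states as independently useful.
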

\begin{remark}
The lemma is an analogue of Theorem 2~\cite{chatzikokolakis2014generalized}.
Its proof in~\cite{xu2015formal} relied on the fact that the counterpart of $\Delta_\alpha$ 
was a metric, which is not true in our case (unless $\alpha=1$). 
\end{remark}

Just like $\Delta_\alpha$ is anti-monotone with respect to $\alpha$, so is $\bd{\alpha}$.
This means that $\bd{\alpha}\sqsubseteq \bdd{1}$.
The definition of $\bdd{1}$ coincides with the definition of the classic 
bisimilarity pseudometric $\mathsf{d}_1$ (see e.g.~\cite{chen2012complexity}), which satisfies $\mathsf{d}_1(s,s')=0$ if and only if $s$ and $s'$ are bisimilar.
Consequently, we obtain the following corollary.
\begin{corollary}
For any $\alpha\ge 1$,  if $s\sim s'$ then $\bd{\alpha}(s,s')=0$.
\end{corollary}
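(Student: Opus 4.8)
The statement follows almost immediately from the two facts recorded just before it, so the plan is first to assemble the short argument and then to indicate how the one nontrivial ingredient would itself be established. By the anti-monotonicity of $\bd{\alpha}$ in $\alpha$ we have $\bd{\alpha}\sqsubseteq\bdd{1}$ for every $\alpha\ge 1$, and $\bdd{1}$ coincides with the classical bisimilarity pseudometric $\mathsf{d}_1$, which vanishes on a pair exactly when the two states are bisimilar. Hence, if $s\sim s'$ then $\mathsf{d}_1(s,s')=0$, and since $0\le\bd{\alpha}(s,s')\le\bdd{1}(s,s')=\mathsf{d}_1(s,s')$ we conclude $\bd{\alpha}(s,s')=0$. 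So the corollary is a one-line consequence once $\bd{\alpha}\sqsubseteq\bdd{1}$ is available.

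The substance therefore lies in the inequality $\bd{\alpha}\sqsubseteq\bdd{1}$, which I would prove directly rather than invoke as a black box. Using Remark~\ref{rem:ufp}, it suffices to exhibit $\mathsf{d}_1=\bdd{1}$ as a pre-fixed point of $\ga$, that is, to show $\ga(\mathsf{d}_1)\sqsubseteq\mathsf{d}_1$; the least-pre-fixed-point characterisation then yields $\bd{\alpha}\sqsubseteq\mathsf{d}_1$. On pairs with differing labels both sides equal $1$, so the content is confined to equal-label pairs, where the claim reduces to the pointwise bound $K_\alpha(\mathsf{d}_1)(\mu_s,\mu_{s'})\le K_1(\mathsf{d}_1)(\mu_s,\mu_{s'})$, using that $\mathsf{d}_1$ is a fixed point of $\Gamma_1$ so that the right-hand side equals $\mathsf{d}_1(s,s')$.

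To prove this bound I would pass to couplings. Fix any $f$ feasible for $K_\alpha(\mathsf{d}_1)$, so that $f(u)-\alpha f(w)\le\mathsf{d}_1(u,w)$ and $f(w)-\alpha f(u)\le\mathsf{d}_1(u,w)$ for all $u,w$. For an arbitrary coupling $\omega$ of $\mu_s$ and $\mu_{s'}$, matching of marginals gives $\int f\,d\mu_s-\alpha\int f\,d\mu_{s'}=\sum_{u,w}\omega(u,w)\,(f(u)-\alpha f(w))\le\sum_{u,w}\omega(u,w)\,\mathsf{d}_1(u,w)$, and symmetrically for the other term; since $\mathsf{d}_1\ge 0$ this yields $\Delta_\alpha(\int f\,d\mu_s,\int f\,d\mu_{s'})\le\sum_{u,w}\omega(u,w)\,\mathsf{d}_1(u,w)$. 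Minimising over couplings and applying Kantorovich--Rubinstein duality (valid precisely because $\mathsf{d}_1$ is a genuine pseudometric), the right-hand side equals $K_1(\mathsf{d}_1)(\mu_s,\mu_{s'})$, and taking the supremum over $f$ gives the desired pointwise inequality.

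The main obstacle is exactly this last inequality, and it is where the non-metric nature of $\Delta_\alpha$ intervenes: the coupling argument closes only because $\mathsf{d}_1=\bdd{1}$ satisfies the triangle inequality, so that its transportation value admits the Kantorovich--Rubinstein dual description as a supremum over $1$-Lipschitz functions. For a general skew parameter the analogous comparison $\bdd{\beta}\sqsubseteq\bdd{\alpha}$ with $\alpha>1$ would face the difficulty that $\bdd{\alpha}$ need not be a pseudometric and duality need not apply; fortunately the corollary only requires comparison with the case $\alpha=1$, where the classical pseudometric $\mathsf{d}_1$ is available.
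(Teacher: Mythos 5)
Your first paragraph is exactly the paper's argument: anti-monotonicity gives $\bd{\alpha}\sqsubseteq\bdd{1}$, the distance $\bdd{1}$ coincides with the classical pseudometric $\mathsf{d}_1$ whose kernel is bisimilarity, and hence $\bd{\alpha}(s,s')=0$ whenever $s\sim s'$. The remainder of your proposal goes beyond the paper by actually proving the inequality $\bd{\alpha}\sqsubseteq\bdd{1}$ (which the paper asserts without proof) by exhibiting $\mathsf{d}_1$ as a pre-fixed point of $\Gamma_\alpha$ via a coupling argument; this is correct, and your observation that the duality step relies on $\mathsf{d}_1$ being a genuine pseudometric (so that the comparison would not transparently extend to $\bdd{\beta}\sqsubseteq\bdd{\alpha}$ for general $1<\alpha<\beta$) is exactly the right subtlety to flag.
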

As in the case of $\tv{\alpha}$, we do not have the converse in our setting. Example~\ref{ex:kernel} shows that $s_0\not\sim s_1$ but we observe that $\bdd{1.5}(s_0,s_1)=0$. Observe:
\begin{multline*}\bdd{1.5}(s_0,s_1) \quad \le \\
\max_f \Big( \sum_{s \in S} f(s)(\mu_{s_0}(s)- 1.5\cdot \mu_{s_1}(s)),\sum_{s \in S} f(s)(\mu_{s_1}(s)- 1.5\cdot \mu_{s_0}(s))\Big) 
\\
= \max_f (f(s_2)(0.6 - 1.5\cdot 0.4) + f(s_3)(0.4-1.5\cdot 0.6), 
\\ \qquad \qquad\qquad  f(s_2)(0.4-1.5\cdot 0.6) + f(s_3)(0.6 - 1.5\cdot 0.4) ).
\end{multline*}
Notice the coefficients of $f(s)$ are all non-positive. Consequently, regardless of the restrictions on $f$, the maximising allocation will be $f(s) = 0$
and, thus,  $\bdd{1.5}(s_0,s_1) = 0$.

\subsection*{Example: Dining Cryptographers}
\begin{figure}[t] 
\centering
\begin{tabular}{c}
\begin{lstlisting}[label={lst:dc},captionpos=b,language=python]
diningCrypto(payingCryptographer):
	firstFlip = flip(p, 1-p)
	previousFlip = firstFlip
	for cryptographer = 0 $\to$ n-1:
		if cryptographer == n-1:
			thisFlip = firstFlip
		else:
			thisFlip = flip(p, 1-p)
		if (cryptographer == payingCryptographer):
			announce(previousFlip == thisFlip)
		else:
			announce(previousFlip != thisFlip)
		previousFlip = thisFlip
\end{lstlisting}
\end{tabular}
\caption{Simulation of Dining Cryptographers Protocol}
\label{fig:dc}
\end{figure}
In the dining cryptographer model~\cite{Chaum88}, a ring of diners want to determine whether one of the diners paid or an outside body.
If a  diner paid, we do not want to reveal which of them it was. The protocol proceeds with each adjacent pair privately flipping a coin, each diner then reports the XOR of the two coin flips they observe, however if the diner paid he would report the negation of this. We can determine if one of them paid by taking the XOR of the announcements. With perfectly fair coins, the protocol guarantees privacy of the paying diner, but it is still differentially private if the coins are biased. If an outside body paid, 
there is no privacy to maintain so we only simulate the scenarios in which one of the diners did pay. The scenario where Cryptographer 0 paid must have similar output distribution to Cryptographer 1 paying, so that it can be determined that one of them did pay, but not which. The internal configuration of the machine is always assumed to be hidden, but the announcements are made public whilst maintaining the privacy of the participating Cryptographer (and the internal states). 

The LMC in Figure~\ref{fig:twoplayer-dc} shows the 2-person dining cryptographers protocol (Figure~\ref{fig:dc}) starting from Cryptographers 0 and 1 using weighted coins with $p = \frac{49}{100}$. The states of the machine encode the 5 variables that need to be tracked. To achieve $\epsilon,\delta$-differential privacy with $\alpha = e^\epsilon = 1.0002$ the minimal (true) value of $\delta$ is $0.00030004$. Our methods generate a correct upper bound $\bd{\alpha}(s_0,s_1) = 0.0004$, showing $\ln(1.0002),0.0004$-differential privacy. The protocol could be played with $n$ players, requiring $O(n^2)$ states, for all possible assignments 
of paying cryptographer  and current cryptographer. In a two-person scenario, the diners would know which of them had paid but an external observer of the output would only learn that one of them paid, not which.

\begin{figure}[t]
\centering
\includegraphics[width=1\textwidth]{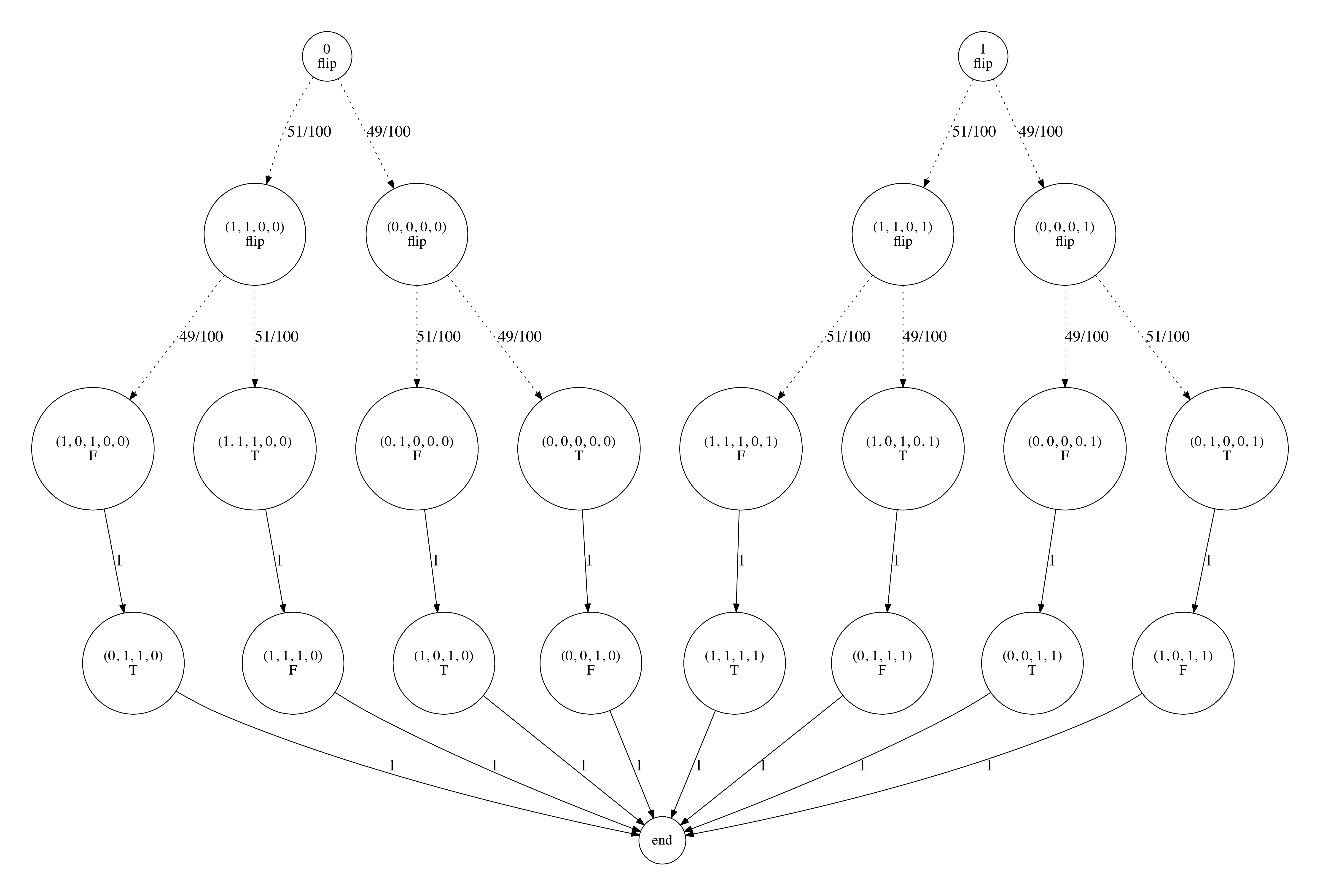}
\caption{Markov Chain for 2 dining cryptographers: state 0 (resp. 1) denotes Cryptographer 0 (resp. 1) paid. The first line of a node is the state name, the second line is the label of the state.}
\label{fig:twoplayer-dc}
\end{figure}

\section{Skewed Kantorovich distances\label{sec:kant}}

Here we discuss how to calculate our variant of the Kantorovich distance. This will inform the next section, in which we look into computing $\bd{\alpha}$.

Recall the definition of $\kanta(d)(\mu,\mu')$ from Definition~\ref{defn:sk}.
In the general case of $\Delta_\alpha (a,b)$, both $a-\alpha b$ and $b-\alpha a$ could be negative, so the maximum with $0$ is taken. 
However, within the Kantorovich function, the constant function $f(i) = 0$ is a valid assignment, which achieves $0$ in either case ($0 - \alpha \times 0 =0 $). 
Consequently, we can simplify the definition of $\Delta_\alpha$ to omit the $0$ case inside $K_\alpha$. 

If $\alpha=1$ then $\Delta_\alpha$ is the absolute value function and it is known that the distance corresponds to a single instance of a linear programming problem~\cite{van2001algorithm}.
However, this is no longer true in our case due to the shape of $\Delta_\alpha (x,y)=\max(x-\alpha y, y-\alpha x)$. 
Still, one can present the calculation as taking the maximum of a pair of linear programs.
We shall refer to this formulation as the ``primal form'' of $\kanta(d)$. We give the first program below,
the other is its symmetric variant with  $\mu, \mu'$ reversed. Below we write $f_i$ for $f(i)$ and let $i,j$ range over $S$, and assume that $d$ is symmetric.
\[
\max_{f \in [0,1]^{S}}\, \Big(\sum_{i} f_i \mu(i) -\alpha \sum_{i} f_i \mu'(i)\Big) \qquad \text{subject to}\qquad  \forall i,j\quad f_i - \alpha f_j \le d_{i,j}
\]

The standard Kantorovich distance ($\alpha=1$) is often presented in the following dual form when $m$ is a pseudometric, based on the minimum coupling between
the two distributions $\mu$ and $\mu'$, weighted by the distance function. 
\[
K(m)(\mu,\mu') = \min_{\omega \in [0,1]^{S\times S}} \sum_{i,j} 
\omega_{i,j} m_{i,j}\qquad\text{subject to} \qquad
\begin{array}{l}
\forall i\quad \sum_j \omega_{i,j} = \mu(i) \\
\forall j\quad \sum_i \omega_{i,j} = \mu'(j) \end{array}
\]

\begin{remark}
The dual form can be viewed as an optimal transportation problem in which an arbitrarily divisible cargo must be transferred from one set of locations (represented by a copy $S^L$ of $S$) to another (represented by a different copy $S^R$ of $S$).
Each state  $s^R\in S^R$ must receive $\mu(s)$, while each state $s^L\in S^L$ must send $\mu'(s)$. If $\omega_{i,j}$ is taken to represent the amount that gets sent from $j^L$ to $i^R$ then the above conditions restrict $\omega$ in accordance with the sending and receiving budgets.
If $d_{i,j}$ represents the cost of sending from $j^L$ to $i^R$ then the objective function  $\sum_{i,j} \omega_{i,j} \cdot d_{i,j}$ corresponds to the overall cost of transport. Consequently, the problem is referred to as 
a mass transportation problem~\cite{Kan42}.
\end{remark}
To achieve a similar ``dual form'' in our case,  we take the dual form of each of our linear programs.
Then we can calculate the distance by taking the maximum of the two minima.
The shape of the dual is given below on the right.
\begin{lemma}\label{thm:dualform}
\[
\begin{array}{rcl}
\displaystyle\max_{f \in [0,1]^{S}}\big(\sum_{i} f_i \mu(i) -\alpha \sum_{i} f_i \mu'(i)\big) 
&\quad =\quad & 
\displaystyle\min_{\omega \in [0,1]^{S\times S}, \tau,\gamma, \eta \in [0,1]^S} \sum_{i,j} \omega_{i,j}\cdot d_{i,j}  + \sum_{i} \eta_i
\\
\textup{subject to}& & \textup{subject to}
\\
\forall i,j \ f_i - \alpha f_j \le d_{i,j} & & \forall i: \sum_{j} \omega_{i,j}  + \tau_i  -\gamma_i + \eta_i= \mu(i)
\\  & & \forall j: \sum_i \omega_{i,j} + \frac{ \tau_j -\gamma_j}{\alpha} \le \mu'(j)
\end{array}
\]
\end{lemma}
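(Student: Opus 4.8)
The plan is to prove the identity by \emph{strong linear programming duality}, reading the right-hand (minimisation) program as the primal and showing that its LP-dual is exactly the left-hand (maximisation) program. I would work in this direction because the auxiliary variables $\tau,\gamma$ on the right make it awkward to guess the dual form starting from the left, whereas dualising the right-hand program collapses its two constraint families back into the single family on the left in a transparent way.

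First I would check that the right-hand program is feasible and bounded, so that strong duality applies and the two optima coincide. Feasibility is witnessed by the product coupling $\omega_{i,j} = \mu(i)\,\mu'(j)$ together with $\tau=\gamma=\eta=0$: this gives $\sum_j \omega_{i,j} = \mu(i)$ and $\sum_i \omega_{i,j} = \mu'(j)$, so both constraint families hold (the second with equality), and all coordinates lie in the stated box. The objective is bounded below by $0$ since every term $\omega_{i,j}d_{i,j}$ and every $\eta_i$ is nonnegative. Hence the minimum is finite and attained.

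Next I would dualise, attaching a free multiplier $f_i$ to each equality constraint and a nonnegative multiplier $g_j$ to each inequality constraint. Collecting in the Lagrangian the coefficient of each primal variable yields the dual-feasibility conditions: the variable $\omega_{i,j}$ forces $f_i - g_j \le d_{i,j}$; the variable $\eta_i$ forces $f_i \le 1$; and, crucially, the two nonnegative variables $\tau_i$ and $\gamma_i$ produce the mutually opposite inequalities $g_i \ge \alpha f_i$ and $g_i \le \alpha f_i$, which together force $g_i = \alpha f_i$. Substituting $g=\alpha f$ turns $f_i - g_j \le d_{i,j}$ into $f_i - \alpha f_j \le d_{i,j}$, turns $g_i \ge 0$ into $f_i \ge 0$, and turns the dual objective $\sum_i f_i\mu(i) - \sum_j g_j\mu'(j)$ into $\sum_i f_i\mu(i) - \alpha\sum_i f_i\mu'(i)$; combined with $f_i \le 1$ this is precisely the primal form. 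So the paired variables $\tau,\gamma$ exist exactly to \emph{weld} the two constraint families of the dual form into the single Lipschitz-style family of the primal form via the relation $g=\alpha f$, and strong duality then gives the claimed equality. The companion program, with $\mu,\mu'$ interchanged, is handled identically, and symmetry of $d$ is what lets the single-inequality constraint $f_i - \alpha f_j \le d_{i,j}$ ranging over all ordered pairs encode the full $\Delta_\alpha$-constraint.

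The step I expect to require genuine care is the upper bounds $\le 1$ on $\omega,\tau,\gamma,\eta$. The clean dualisation above uses only nonnegativity; were the bounds $\le 1$ active they would introduce extra multipliers and destroy the match with the primal form. I would therefore argue that they hold without loss of generality, i.e.\ that an optimal solution exists inside the box. The key observations are that only $\tau_i-\gamma_i$ enters the constraints, so the gauge shift $\tau_i,\gamma_i \mapsto \tau_i+c,\gamma_i+c$ is harmless and one may assume $\min(\tau_i,\gamma_i)=0$; that inflating any $\omega_{i,j}$ or $\eta_i$ only increases the (nonnegative) objective, so a minimiser keeps the transported mass at $1$ and hence each such coordinate at most $1$; and that these facts bound the residual $|\tau_i-\gamma_i|$ through constraints (I) and (II). With the box shown to be non-binding, the minimum over $[0,1]$ equals the minimum over the nonnegative orthant, and the duality argument above delivers the lemma.
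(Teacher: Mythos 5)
Your proof is correct in substance and rests on the same tool as the paper's --- strong LP duality --- but runs it in the opposite direction. The paper starts from the left-hand maximisation, rewrites it in standard form by splitting $f$ into two nonnegative copies $\vec a=\vec b$ (so that the box constraint $f\le\vec 1$ and the coupling $\vec a=\vec b$ become explicit inequality rows), and reads off the right-hand program as its dual: $\tau$ and $\gamma$ arise as the multipliers of the two inequalities encoding $\vec a=\vec b$, and $\eta$ as the multiplier of $\vec a\le\vec 1$. You instead take the stated minimisation as primal and dualise it back, discovering that the multiplier $g_j$ of the second constraint family is welded to the multiplier $f_i$ of the first by $g_i=\alpha f_i$ (forced by the opposite-signed columns of $\tau_i$ and $\gamma_i$), which collapses the two families into the single constraint $f_i-\alpha f_j\le d_{i,j}$ and the objective $\sum_i f_i\mu(i)-\alpha\sum_i f_i\mu'(i)$. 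Your direction is arguably the cleaner way to \emph{verify} the identity once the dual has been guessed, while the paper's direction explains where $\tau,\gamma,\eta$ come from; the duality computations on both sides are routine and yours check out, as does your feasibility/boundedness check.

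The one step that neither you nor the paper fully discharges is the restriction of $\omega,\tau,\gamma,\eta$ to $[0,1]$ rather than to the nonnegative orthant. Weak duality gives $\max\le\min$ over the box for free, but equality requires an optimal dual solution \emph{inside} the box, and your sketch of this has a hole: the claim that inflating $\omega_{i,j}$ or $\eta_i$ only increases the objective fails when $d_{i,j}=0$ (mass can then be parked on zero-cost edges for free), and bounding the residual $\tau_i-\gamma_i$ via the first constraint does not by itself cap $\gamma_i$ at $1$, since that constraint only yields $\gamma_i=\sum_j\omega_{i,j}+\tau_i+\eta_i-\mu(i)$, which exceeds $1$ as soon as the row mass does. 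An argument can be pushed through (for instance by ruling out unbounded circulation along zero-cost cycles and exhibiting a bounded optimal vertex), but as written this step is asserted rather than proved. To be fair, the paper's own proof simply declares the dual variables to range over $[0,1]$ without comment, so you have at least identified and attempted a point that the paper glosses over entirely.
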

The dual form presented above is a simplified (but equivalent) form of the immediate dual obtained via the standard LP recipe.
Note that the polytope we are optimising over is independent of $d$, which appears only in the objective function.
The dual of the other linear program is obtained by swapping $\mu,\mu'$.
\begin{remark}
In the skewed case,  we optimise over the following polytope
\[
\Omega_{\mu,\mu'} = \left\{ (\omega , \eta )\in [0,1]^{S\times S}\times[0,1]^S \quad | \quad   \begin{array}{l} \exists \gamma,\tau \in [0,1]^S \\ \quad \forall i: \sum_{j} \omega_{i,j} + \tau_i -\gamma_i +\eta_i = \mu(i)\\ \quad \forall j: \sum_i \omega_{i,j} + \frac{ \tau_j -\gamma_j}{\alpha} \le \mu'(j) \end{array} \right\}
\] 
One can also view it as a kind of transportation problem. As before, cargo can be transferred through the standard routes with $\omega$ at a cost $d$ or new resource $\eta$ can be obtained from an extra location at a cost of $1$. There are also additional, cost-free routes between corresponding pairs $s^L$ and $s^R$ (represented by $\tau_s$) and back (represented by $\gamma_s$). 
These extra routes are quite peculiar.  En route from $s^L$ to  $s^R$ the cargo `grows': when $\frac{\tau_s}{\alpha}$ is sent from $s^L$, a larger amount of $\tau_s$ is received at $s^R$.
Overall, the total amount of cargo sent may be less than that received, so the sending constraints are now inequalities. 
From $s^R$ to $s^L$ the cargo `shrinks': when $\gamma_s$ is sent from $s^R$, only $\frac{\gamma_s}{\alpha}$ is received by $s^L$. 

It is immediate that $\tau$ routes can be useful.
The $\gamma$ routes may be useful for optimisation under two conditions. Firstly 
the shrinkage of the cargo must be made up elsewhere, i.e.,
through `growing' $\tau$ routes. Additionally the cost $\alpha \times d(s_1^L, s^R) + d(s^L, s_2^R)$ is lower than $d(s_1^L, s_2^R)$, which may well be the case due to the lack of triangle inequality.
\end{remark}

We arrive at the following formulation, which we call the ``dual form'':
$$K_\alpha(d)(\mu, \mu') = \max \left\{ \min_{\omega, \eta \in \Omega_{\mu,\mu'}}\sum_{i,j} \omega_{i,j}\cdot d_{i,j} + \sum_i \eta_i
 , \ \min_{\omega,\eta \in \Omega_{\mu',\mu}}\sum_{i,j} \omega_{i,j}\cdot d_{i,j}  + \sum_i \eta_i  \right\}.
$$
Note that $K_\alpha(d)(\mu,\mu')$ can  be computed in polynomial time as a pair of linear programs in either primal or dual form, 
and taking the maximum (in either case).
In our calculations related to $\bd{\alpha}$, the distributions $\mu,\mu'$ will always be taken to be $\mu_s,\mu_{s'}$ respectively, for some $s,s'\in S$.
The ability to switch between primal and dual form will play a useful role in our complexity-theoretic arguments.

\section{Computing $\bd{\alpha}$}\label{sec:comp}

We start off by observing that all distances $\bd{\alpha}(s,s')$ are rational and can be expressed in polynomial size with respect to $\mcal{M}$.
To that end, we exploit a result by Sontag \cite{sontag1985real}, which states that, without affecting satisfiability, 
quantification in the first-order fragment of linear real arithmetic (LRA) 
can be restricted to rationals of polynomial size  with respect to formula length (as long as all coefficients present in the formula are rational). 
Consequently, if we can express ``there exists a least fixed point $d$ of $\ga$" in this fragment (with a polynomial increase in size), we can draw the intended conclusion.

 We give the relevant formula in Figure~\ref{fig:form:sigmatwop}.
The formula asserts the existence of a distance $d$,
which is a pre-fixed point of $\Gamma_\alpha$  ($\forall f. \phi(d,f)$) such that any other pre-fixed point $d'$ of $\Gamma_\alpha$ is greater.
Note that $\forall f.\phi(f,d)$ exploits the fact that $\max_f A(f) \le d(s,s')$ is equivalent to $\forall f (A(f)\le d(s,s'))$. Sontag's result
then implies the following.

\begin{theorem}\label{thm:polyrational}
Values of $\bd{\alpha}$ are rational.  There exists a polynomial $p$ such that for any LMC $\mcal{M}$ and 
$s,s'\in S$, the size of $\bd{\alpha}$ (in binary) can be bounded from above by a polynomial in $|\mcal{M}|$. 
\end{theorem}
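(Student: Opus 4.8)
The plan is to follow the route sketched just before the statement: encode the assertion ``$d$ is the least pre-fixed point of $\ga$'' as a sentence in the first-order theory of linear real arithmetic (LRA) with rational coefficients, and then invoke Sontag's theorem to extract a rational witness of polynomial size. Throughout I assume $\alpha$ is rational, so that together with the rational transition probabilities of $\mcal{M}$ every coefficient occurring in the encoding is rational.

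First I would write down the pre-fixed point condition $\ga(d)\sqsubseteq d$ as a formula $\Psi(d)$ with free variables the $|S|^2$ entries $d(s,s')$ (subject to box constraints $d\in[0,1]^{S\times S}$). For a pair $(s,s')$ with $\lab(s)\ne\lab(s')$ the condition is just $d(s,s')=1$. For $\lab(s)=\lab(s')$ it is $\kanta(d)(\mu_s,\mu_{s'})\le d(s,s')$, and here I would use the equivalence highlighted in the text: since $\kanta(d)(\mu_s,\mu_{s'})$ is a supremum over the admissible functions $f$, the inequality $\kanta(d)(\mu_s,\mu_{s'})\le d(s,s')$ is equivalent to $\forall f.\,\phi(d,f)$, where $\phi(d,f)$ states that if $f\in[0,1]^S$ is admissible for $d$ (i.e.\ $\bigwedge_{v,v'}\Delta_\alpha(f_v,f_{v'})\le d(v,v')$) then $\Delta_\alpha(\sum_u f_u\mu_s(u),\sum_u f_u\mu_{s'}(u))\le d(s,s')$. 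Each atom $\Delta_\alpha(x,y)\le c$ unfolds into the conjunction $x-\alpha y\le c\,\wedge\,y-\alpha x\le c$ (the third clause $0\le c$ being automatic since $c\in[0,1]$), so $\phi$ is a genuine LRA formula with rational coefficients, and conjoining $\forall f.\,\phi(d,f)$ over all same-label pairs together with the label-mismatch equalities yields $\Psi(d)$.

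Next I would express leastness. By Remark~\ref{rem:ufp} the least fixed point coincides with the least pre-fixed point, so $d=\bd{\alpha}$ is characterised by $\Psi(d)\wedge\forall d'.(\Psi(d')\to d\sqsubseteq d')$, where $d\sqsubseteq d'$ abbreviates $\bigwedge_{s,s'} d(s,s')\le d'(s,s')$. All of this stays within first-order LRA: the quantifier blocks are $\exists d$, $\forall f$, $\forall d'$ and (after pushing the negation through $\Psi(d')$) an innermost $\exists f'$, each ranging over real vectors of dimension $O(|S|^2)$, and the number of atoms is polynomial in $|\mcal{M}|$, so the whole formula has size polynomial in $|\mcal{M}|$. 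To isolate a single distance value I would introduce a fresh variable $z$ together with the clause $z=d(s,s')$, obtaining the satisfiable sentence $\exists z\exists d.\big(\Psi(d)\wedge\forall d'(\Psi(d')\to d\sqsubseteq d')\wedge z=d(s,s')\big)$.

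Finally I would apply Sontag's theorem to this polynomial-size sentence with rational coefficients: being satisfiable, it admits a satisfying assignment in which every variable, and in particular $z$, is a rational of bit size polynomial in the formula length, hence polynomial in $|\mcal{M}|$. Since the least pre-fixed point is \emph{unique}, the value of $z$ is forced to equal $\bd{\alpha}(s,s')$, so $\bd{\alpha}(s,s')$ is exactly that polynomial-size rational, which proves both claims. I expect the crux to be the faithful, blow-up-free encoding of the Kantorovich supremum: the step that rewrites $\kanta(d)(\mu_s,\mu_{s'})\le d(s,s')$ as a single universally quantified linear condition is precisely what keeps everything inside first-order LRA, and one must check that the nested quantification over $d'$ needed for leastness neither leaves the fragment nor inflates the formula beyond polynomial size.
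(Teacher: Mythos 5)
Your proposal is correct and follows essentially the same route as the paper: it encodes ``$d$ is the least pre-fixed point of $\Gamma_\alpha$'' as the polynomial-size LRA sentence of Figure~\ref{fig:form:sigmatwop} (using $\forall f$ to express $K_\alpha(d)(\mu_s,\mu_{s'})\le d(s,s')$ and $\forall d'$ for minimality) and then applies Sontag's bound, with uniqueness of the least pre-fixed point forcing the polynomial-size rational witness to be $\bd{\alpha}$ itself.
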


\begin{remark}
\label{rem:poly}
Sontag~\cite{sontag1985real} uses the fact mentioned above to relate the alternation hierarchy within LRA to the polynomial hierarchy $\PH$:
formulae of the form $\exists x_1 \forall x_2 \dots Q x_k F(x_1 \dots x_k)$ (with quantifier-free $F$) 
correspond to $\Sigma_k^\P$ (and formulae starting with $\forall$ to $\Pi_k^\P$). Recall that $\Sigma_1^\P=\NP$.
\end{remark}
\begin{figure}[t]
\begin{align*}
 \exists d \in [0,1]^{S\times S} &\ (\forall f \in [0,1]^{S} \phi(d,f) \wedge \forall d' \in [0,1]^{S\times S} (\forall f \in [0,1]^{S} \phi(d',f) \implies \bigwedge_i d_i \le d'_i))
\\ \phi(d,f) &= \bigwedge_{s,s'} 
\begin{cases}
    d_{s,s'} = 1 &  \lab(s) \ne \lab(s') \\
    (\bigwedge_{i,j} f_i -\alpha f_j \le d_{i,j} \wedge f_j -\alpha f_i \le d_{i,j} ) & \lab(s) = \lab(s') \\ 
    \quad \implies (\sum_{i} f_i \mu_s(i) -\alpha \sum_{i} f_i \mu_{s'}(i) \le d_{s,s'} \\ \quad \quad\quad\quad \wedge \sum_{i} f_i \mu_{s'}(i) -\alpha \sum_{i} f_i \mu_{s}(i) \le d_{s,s'} )
\end{cases} 
\end{align*}
\caption{Logical formulation of least pre-fixed point.}
\label{fig:form:sigmatwop}
\end{figure}
Next we focus on the following decision problem for $\bd{\alpha}$.

\begin{center}
\bdp: given $s,s'\in S$ and $\theta\in \Q$, is it the case that $\bd{\alpha}(s,s') \le \theta$?
\end{center}
Recall that  the analogous problem for $\tva$ is undecidable (Remark~\ref{rem:bad}).
In our case, the problem turns out to be decidable and the argument does not depend on whether  $<$ or $\le$ is used.
To establish decidability we can observe that $\bd{\alpha}(s,s') \le \theta$ can be expressed in LRA simply by adding $d(s,s')\le \theta$ to the formula from Figure~\ref{fig:form:sigmatwop}.

We can simplify the formula, though,  using  $\bd{\alpha}=\min\,{\{d\,|\, \Gamma_\alpha(d)\sqsubseteq d\}}$.
Then $\bd{\alpha}(s,s')\le \theta$  can be specified as  the existence of a pre-fixed point $d$ such that $d(s,s') \le\theta$.
This can be done as follows, using $\phi(d,f)$ from Figure~\ref{fig:form:sigmatwop}.
\[
\exists d \in [0,1]^{S\times S}\, (\ \forall f \in [0,1]^{S} \phi(d,f)\ \wedge\ d(s,s')\le \theta\ )
\]
By Sontag's results, this not only yields decidability but also membership in $\Sigma_2^\P$. Recall that $\NP\subseteq \Sigma_2^\P\subseteq \PH\subseteq \PSPACE$.

Note that the universal quantification over $f$ remains, i.e. we can still only conclude that the problem is in $\Sigma_2^\P$.
To overcome this, we shall use the dual form instead (Lemma~\ref{thm:dualform}). This will enable us to eliminate the universal quantification and replace it
with existential quantifiers using the fact that
$\min_\omega A(\omega)\le B$ is equivalent to $\exists \omega (A(\omega)\le B)$. The resultant formula is shown in Figure~\ref{fig:form:less}.

\begin{figure}[t]
\begin{align*}
    \mathit{\bdp}(s,s',\theta) =& \exists (d_{i,j})_{i,j  \in S} \quad \bigwedge_{i,j  \in S} (0\le d_{i.j} \le 1) \ \wedge \  \mathit{prefixed}(d)\  \wedge \  d_{s,s'} \le \theta
    \\
    \mathit{prefixed}(d) =& \bigwedge_{q, q' \in S} \begin{cases}
     d_{q,q'} = 1 & \lab(q) \ne \lab(q') \\
     \mathit{prefixed}_1(d,d_{q,q'},q,q')  & \lab(q)  =\lab(q')
    \\ \quad\quad \wedge \     \mathit{prefixed}_1(d,d_{q,q'},q',q) 
    \end{cases} \\
    \mathit{prefixed}_1(d,x, q,q') = &\exists (\omega_{i,j})_{i,j  \in S} 
    \quad \exists (\gamma_{i})_{i \in S}
    \quad \exists (\tau_{i})_{i \in S}
    \quad \exists (\eta_{i})_{i \in S}
    \\ & \quad \sum_{i,j  \in S} \omega_{i,j}\cdot d_{i,j} + \sum_i \eta_i \le x
    \\ & \wedge \bigwedge_{i,j  \in S} (0 \le \omega_{i,j} \le 1)\wedge \bigwedge_{i \in S} (0 \le\gamma_{i} \le 1\, \wedge\, 0 \le \tau_{i} \le 1 \wedge\, 0 \le \eta_{i} \le 1)
    \\ & \wedge \bigwedge_{i \in S}( \sum_{j \in S} \omega_{i,j} -\gamma_i+ \tau_i +\eta_i = \mu_{q}(i))  \wedge \bigwedge_{j \in S} ( \sum_{i \in S} \omega_{i,j} + \frac{ \tau_j -\gamma_j}{\alpha} \le \mu_{q'}(j))
\end{align*}
\caption{$\NP$ Formula for \bdp}
\label{fig:form:less}
\end{figure}

Note the formula is not linear due to $\omega_{i,j}\cdot d_{i,j}$. However, because we know (Theorem~\ref{thm:polyrational}) 
that $\bd{\alpha}$ corresponds to an assignment of poly-sized rationals,
 we can consider the formula with $d$ fixed at $\bd{\alpha}$. Then it does become an LRA formula (of polynomially bounded length with respect to $|\mcal{M}|$) and
we can again conclude that  the assignments of $\omega, \gamma, \tau$  must also involve rationals whose size is polynomially bounded. 
 Consequently, the formula implies membership of our problem in $\Sigma_1^\P=\NP$: it suffices to guess  the satisfying assignment, guaranteed to be rational and of polynomial size.
\begin{theorem}\label{thm:bdp}
$\bdp$ is in $\NP$.
\end{theorem}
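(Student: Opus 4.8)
The plan is to establish membership in $\NP$ by exhibiting a polynomially-sized certificate together with a polynomial-time verification procedure. The certificate naturally consists of the value of $\bd{\alpha}$ at each pair of states, i.e. a rational-valued matrix $(d_{i,j})_{i,j\in S}$. By Theorem~\ref{thm:polyrational} each entry is a rational whose bit size is polynomially bounded in $|\mcal{M}|$, so guessing this matrix costs only polynomially many nondeterministic bits. The task then reduces to verifying in deterministic polynomial time that the guessed $d$ is indeed the least fixed point of $\ga$ and that $d_{s,s'}\le\theta$.

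The key observation is the characterisation $\bd{\alpha}=\min\{d\mid \ga(d)\sqsubseteq d\}$ from Remark~\ref{rem:ufp}: to certify that $\bd{\alpha}(s,s')\le\theta$ it suffices to produce \emph{some} pre-fixed point $d$ satisfying $d_{s,s'}\le\theta$, since the least pre-fixed point is below any pre-fixed point. Thus I would not verify minimality at all; I would only check that the guessed $d$ is a pre-fixed point, which means verifying $\ga(d)(q,q')\le d_{q,q'}$ for every pair $(q,q')$. For pairs with $\lab(q)\ne\lab(q')$ this just asserts $d_{q,q'}=1$, trivially checkable. For pairs with matching labels the condition is $\kanta(d)(\mu_q,\mu_{q'})\le d_{q,q'}$, and here the central difficulty of the problem surfaces: $\kanta(d)(\mu_q,\mu_{q'})$ is itself defined as a supremum over functions $f$, so naively this reintroduces a universal quantifier over $f$ and only places the problem in $\Sigma_2^\P$.

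The decisive step — and the one I expect to be the main obstacle — is to eliminate that universal quantifier by passing to the dual form of Lemma~\ref{thm:dualform}. Because $\kanta(d)(\mu_q,\mu_{q'})$ equals the maximum of two \emph{minimisation} problems over the polytope $\Omega$, the inequality $\kanta(d)(\mu_q,\mu_{q'})\le d_{q,q'}$ unfolds into the assertion that \emph{each} of the two minima is $\le d_{q,q'}$, and a statement of the form $\min_{\omega,\eta} A(\omega,\eta)\le B$ is equivalent to $\exists(\omega,\eta)\,(A(\omega,\eta)\le B)$. This converts the troublesome $\forall f$ into an $\exists(\omega,\gamma,\tau,\eta)$, yielding the purely existential formula $\mathit{prefixed}(d)$ of Figure~\ref{fig:form:less}. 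The certificate is therefore augmented with witnessing transport variables $\omega,\gamma,\tau,\eta$ for each label-matched pair.

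Finally I would address the one remaining subtlety: the constraint $\sum_{i,j}\omega_{i,j}\cdot d_{i,j}+\sum_i\eta_i\le d_{q,q'}$ is bilinear in $d$ and $\omega$, so the guessed system is not a priori an LRA formula. However, once $d$ is fixed to its guessed (rational, polynomial-size) value, the coefficients $d_{i,j}$ become constants and the formula becomes linear in the remaining variables $\omega,\gamma,\tau,\eta$. By Sontag's result~\cite{sontag1985real}, the existentially quantified LRA formula, if satisfiable, is witnessed by rationals of polynomial size, so these transport variables can also be guessed within the polynomial bound. The verifier then simply evaluates the resulting finite conjunction of linear (in)equalities over the fully instantiated rationals, which is clearly a polynomial-time check. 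Combining the polynomial-size certificate with the polynomial-time verification gives membership in $\NP$, as required.
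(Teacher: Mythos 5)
Your proposal is correct and follows essentially the same route as the paper: certify $\bd{\alpha}(s,s')\le\theta$ by guessing a poly-size rational pre-fixed point $d$ (no minimality check needed, by Remark~\ref{rem:ufp}), eliminate the $\forall f$ in the pre-fixed-point condition via the dual form of Lemma~\ref{thm:dualform}, and handle the bilinear terms $\omega_{i,j}\cdot d_{i,j}$ by fixing $d$ and invoking Sontag's bound to get poly-size witnesses for the transport variables. This is precisely the argument leading to the formula in Figure~\ref{fig:form:less} and the conclusion $\Sigma_1^\P=\NP$.
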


The decidability of $\bdp$ makes it possible to approximate $\bd{\alpha}(s,s')$ to arbitrary (rational) precision $\epsilon$ by binary search.
This will involve $O(|\epsilon|)$ calls to the oracle for $\bdp$ (where $|\epsilon|$ is the number of bits required to represent $\epsilon$ in binary).

What's more, assuming the oracle, one can actually find the exact value of $\bd{\alpha}(s,s')$ in polynomial time (wrt $\mcal{M}$). This exploits the fact that the value of $\bd{\alpha}$ is rational and its size is polynomially bounded, so one can find it by approximation
to a carefully chosen level of precision and then finding the relevant rational with the continued fraction algorithm~\cite{GLS88,EY10}.

\begin{theorem}\label{thm:main}
$\bd{\alpha}$ can be calculated in polynomial time with an $\NP$ oracle.
\end{theorem}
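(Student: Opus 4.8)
The plan is to combine the rationality bound of Theorem~\ref{thm:polyrational} with the $\NP$ oracle for $\bdp$ (Theorem~\ref{thm:bdp}) by means of binary search, followed by exact rational reconstruction. First I would invoke Theorem~\ref{thm:polyrational} to fix a polynomial $N = N(|\mcal{M}|)$ such that $\bd{\alpha}(s,s')$ equals a rational $p/q \in [0,1]$ with $0 \le p \le q \le 2^{N}$. The key quantitative consequence is that any two distinct rationals with denominators at most $2^{N}$ differ by at least $2^{-2N}$, so $\bd{\alpha}(s,s')$ is separated from every other such rational by a gap of width at least $2^{-2N}$.

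Next I would localise $\bd{\alpha}(s,s')$ to within this gap by binary search. Maintaining an interval $[\ell,u]\subseteq[0,1]$ guaranteed to contain $\bd{\alpha}(s,s')$, initialised to $[0,1]$, at each step I would query the $\bdp$ oracle at the dyadic midpoint $\theta=(\ell+u)/2$ and set $u:=\theta$ when the answer is ``yes'' (i.e.\ $\bd{\alpha}(s,s')\le\theta$) and $\ell:=\theta$ otherwise; this is sound because $\bd{\alpha}(s,s')\le\theta$ is monotone in $\theta$. After $2N+1$ iterations the interval has width $2^{-2N-1}$, so its midpoint $\tilde{x}$ satisfies $|\tilde{x}-\bd{\alpha}(s,s')|\le 2^{-2N-2} < 1/(2Q^2)$ with $Q=2^{N}$. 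This uses only polynomially many oracle calls, each on an instance of polynomial size (the thresholds $\theta$ are dyadics with denominator at most $2^{2N+1}$).

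Finally I would recover the exact value by rational reconstruction. With denominator bound $Q=2^{N}$, the standard continued-fraction (Diophantine rounding) algorithm~\cite{GLS88,EY10} finds, in time polynomial in the bit sizes of $\tilde{x}$ and $Q$, the unique rational with denominator at most $Q$ lying within $1/(2Q^2)$ of $\tilde{x}$. Since $\bd{\alpha}(s,s')=p/q$ has $q\le Q$ and lies within $1/(2Q^2)$ of $\tilde{x}$, and the gap property guarantees uniqueness, the algorithm returns exactly $\bd{\alpha}(s,s')$. Composing the three stages yields a polynomial-time procedure using an $\NP$ oracle, as claimed.

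The main obstacle is the precision bookkeeping: one must derive the number of binary-search steps from the bit-size bound $N$ of Theorem~\ref{thm:polyrational} so that the resulting window is narrower than both the inter-rational gap $2^{-2N}$ and the reconstruction radius $1/(2Q^2)$, while simultaneously verifying that every threshold passed to the oracle stays of polynomial size. Once these are in place, correctness of the final step rests only on the classical uniqueness fact for best rational approximations via continued fractions.
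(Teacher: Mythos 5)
Your proposal is correct and follows essentially the same route as the paper: binary search on the threshold $\theta$ using the $\NP$ oracle for $\bdp$, down to a precision determined by the polynomial bit-size bound of Theorem~\ref{thm:polyrational}, followed by exact recovery via the continued-fraction algorithm of~\cite{GLS88,EY10}. The paper only sketches this argument, and your precision bookkeeping (the $2^{-2N}$ separation between distinct rationals with denominator at most $2^{N}$ and the $1/(2Q^2)$ reconstruction radius) supplies exactly the details it leaves implicit.
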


As a consequence, the problem of computing $\bd{}$
reduces to propositional satisfiability, i.e., can be encoded in SAT.
This justifies, for instance, the following approach: treat every variable
as a ratio of two integers from an exponential range, and give the system
of resulting constraints to an Integer Arithmetic or SAT solver. While
this might look like resorting to a general-purpose ``hammer'', Theorem~\ref{thm:main}
is necessary for this method to work: it is not, in fact, possible to solve general polynomial constraint systems
relying just on SAT.%
\footnote{%
More precisely, the existence of such a procedure would be a breakthrough
in the computational complexity theory,
showing that $\mathbf{NP} = \exists\mathbb R$. This would imply that
a multitude of problems in computational geometry could be solved using
SAT solvers~\cite{SchaeferS17,Cardinal15}. Unlike for $\bd{}$, variable assignments in these problems may need to be
irrational, even if all numbers in the input data are integer or rational.
}

We expect, however, this direct approach to be inferior to the following
observation. Theorem~\ref{thm:polyrational} reveals that the variables in our constraint
system need not assume irrational values or have large bit
representations. Thus, one can give the system to a more
powerful theory solver, or an optimisation tool, but to expect that the
existence of simple and small models (solutions) will help the SMT
heuristics (resp.\ optimization engines) to find them quickly.

\section{Conclusion and Further Work}

We have demonstrated that bisimilarity distances can be used to determine differential privacy parameters, despite their non-metric properties. 
We have established that the complexity of finding these values is polynomial, relative to an $\NP$ oracle. Yet, 
it may still be possible to obtain a polynomial algorithm---although much like in the case of the classical bisimilarity distances and linear programming, it may not necessarily outperform theoretically slower procedures.

We conjecture that $\bd{\alpha}$, which we defined as the least fixed point of the operator $\Gamma_\alpha$, may in fact be characterized as the unique fixed point of a similar operator. By the results of Etessami and Yannakakis~\cite{EY10}, it would then follow that $\bd{\alpha}$ can be computed in $\PPAD$, a smaller complexity class, improving upon our $\NP$ upper bound and matching the complexity of a closely related setting (see below). The reason is the continuity of $\Gamma_\alpha$, which follows from the properties of the polytope over which $f$ ranges (in the definition of $K_\alpha(d)$). Whether $\bd{\alpha}$ can in fact be computed in polynomial time or is $\PPAD$-hard seems to be a challenging open question.

Our existing work is limited to labelled Markov chains, or fully probabilistic automata. However, the standard bisimulation distances can also be defined on deterministic systems, where 
their computational complexity is $\PPAD$ \cite{van2014complexity}. In our scenario, the privacy can only be analysed between two start states, but it is also reasonable to allow an input in the form of a trace or sequence of actions; the output would also be a trace. Here the choice of labels (at a specific state) would correspond to decisions taken by the user, and the availability of only one label would mean that this is the output. This setting would support a broader range of scenarios that could be modelled and verified as differentially private.
 
\subsubsection*{Acknowledgement}
David Purser gratefully acknowledges funding by the UK Engineering and Physical Sciences Research Council (EP/L016400/1), the EPSRC Centre for Doctoral Training in Urban Science. Andrzej Murawski is supported by a Royal Society Leverhulme Trust Senior Research Fellowship and the International Exchanges Scheme (IE161701).

\bibliographystyle{splncs04}
\bibliography{atva}
\newpage
\appendix

\appendix
\section{Proving Lemma~\ref{lem:bounds}. $\tv{\alpha} \sqsubseteq \bd{}$}

We will first introduce the skewed Kantorovich distance $K_\alpha(\dm)$ over traces that we will place in between the two. We will show that for all $s,s' \in S$: $\tv{\alpha}(s,s') = K_\alpha(\dm)(\nu_s,\nu_{s'}) \le \bd{}(s,s;)$. We consider each of these (in)equalities:
\begin{lemma}\label{lem:appen:eqpart}
$\tv{\alpha}(s,s') = K_\alpha(\dm)(\nu_s,\nu_{s'})$
\end{lemma}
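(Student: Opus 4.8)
The plan is to establish the two inequalities $\tv{\alpha}(s,s') \le K_\alpha(\dm)(\nu_s,\nu_{s'})$ and $K_\alpha(\dm)(\nu_s,\nu_{s'}) \le \tv{\alpha}(s,s')$ separately. First I would unpack the definition of $K_\alpha(\dm)$ in the trace setting: following Definition~\ref{defn:sk}, the lifting is a supremum over measurable $f\colon \alphabet^\omega \to [0,1]$ subject to $\Delta_\alpha(f(w),f(w')) \le \dm(w,w')$ for all traces $w,w'$. Since $\dm$ is the discrete metric, this constraint is vacuous: for $w=w'$ it reads $\Delta_\alpha(f(w),f(w)) = \max((1-\alpha)f(w),\,0) = 0$ (using $\alpha\ge 1$ and $f\ge 0$), while for $w\ne w'$ it reads $\Delta_\alpha(f(w),f(w'))\le 1$, which always holds because $\Delta_\alpha(x,y)\le\max(x,y)\le 1$ on $[0,1]^2$. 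Hence $K_\alpha(\dm)(\nu_s,\nu_{s'}) = \sup_f \Delta_\alpha\big(\int f\,d\nu_s,\int f\,d\nu_{s'}\big)$, with $f$ ranging over all measurable $[0,1]$-valued functions.

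For the easy inequality $\tv{\alpha} \le K_\alpha(\dm)$ I would simply test the Kantorovich supremum on indicator functions: given any $E\in\mcal{F}$, the function $f=\ind{E}$ is admissible and satisfies $\int f\,d\nu_s = \nu_s(E)$ and $\int f\,d\nu_{s'} = \nu_{s'}(E)$, so $\Delta_\alpha(\nu_s(E),\nu_{s'}(E))$ is among the values over which $K_\alpha(\dm)$ takes its supremum. Taking the supremum over $E\in\mcal{F}$ yields the claim.

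The real content is the reverse inequality, for which I would use the layer-cake (Tonelli) representation: for any measurable $f\colon\alphabet^\omega\to[0,1]$ and any measure $\nu$ on $\mcal{F}$, one has $\int f\,d\nu = \int_0^1 \nu(E_t)\,dt$, where $E_t = \{\,w : f(w) > t\,\}\in\mcal{F}$ and the outer integration range is $[0,1]$ because $f\le 1$. Applying this to both $\nu_s$ and $\nu_{s'}$ and using linearity, the first argument of the outer $\Delta_\alpha$ becomes $\int f\,d\nu_s - \alpha\int f\,d\nu_{s'} = \int_0^1\big(\nu_s(E_t)-\alpha\,\nu_{s'}(E_t)\big)\,dt$. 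The elementary bound $x-\alpha y \le \Delta_\alpha(x,y)$, immediate from the definition of $\Delta_\alpha$ as a maximum, gives $\nu_s(E_t)-\alpha\,\nu_{s'}(E_t) \le \Delta_\alpha(\nu_s(E_t),\nu_{s'}(E_t)) \le \tv{\alpha}(s,s')$ for every $t$, and integrating this constant bound over the unit interval preserves it. The symmetric term $\int f\,d\nu_{s'} - \alpha\int f\,d\nu_s$ is handled identically with $s,s'$ swapped, and $0\le\tv{\alpha}(s,s')$ is trivial; taking the maximum shows $\Delta_\alpha\big(\int f\,d\nu_s,\int f\,d\nu_{s'}\big) \le \tv{\alpha}(s,s')$, and a supremum over $f$ concludes.

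The main obstacle is purely measure-theoretic bookkeeping rather than anything about the skewing: I must justify that the admissible set genuinely collapses to all measurable $f$ (not merely simple functions), and invoke Tonelli to interchange $\int d\nu$ with $\int_0^1 dt$ for the layer sets $E_t$. Once the layer-cake identity is in place, the parameter $\alpha$ causes no difficulty, since it enters only through the linear bound $x-\alpha y\le\Delta_\alpha(x,y)$; and it is precisely the fact that we integrate over an interval of length exactly one that lets the constant bound $\tv{\alpha}(s,s')$ pass through unchanged.
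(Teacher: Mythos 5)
Your proof is correct, but the second (harder) inequality is established by a genuinely different route from the paper's. The paper proves $K_\alpha(\dm)(\nu_s,\nu_{s'}) \le \tv{\alpha}(s,s')$ by a perturbation argument: for a simple function $f$ it takes each level set $A_i$ on which $f$ has an intermediate value $v\in(0,1)$ and pushes that value to $0$ or $1$ according to the sign of $\nu_s(A_i)-\alpha\nu_{s'}(A_i)$, showing the objective never decreases; it then passes from simple to general measurable $f$ by monotone convergence. You instead bound an arbitrary measurable $f$ in one step via the layer-cake identity $\int f\,d\nu=\int_0^1\nu(E_t)\,dt$ with $E_t=\{w: f(w)>t\}$, so that $\int f\,d\nu_s-\alpha\int f\,d\nu_{s'}=\int_0^1\bigl(\nu_s(E_t)-\alpha\nu_{s'}(E_t)\bigr)\,dt\le\int_0^1\tv{\alpha}(s,s')\,dt=\tv{\alpha}(s,s')$, the last step using that the integration interval has length one. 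Your route is shorter and avoids both the case analysis on level sets and the limiting argument (and the attendant care about choosing a monotone approximating sequence); it also recovers the paper's Corollary on the sufficiency of indicator functions as a byproduct, since $K_\alpha(\dm)\le\tv{\alpha}\le\sup_{E\in\mcal{F}}\Delta_\alpha(\nu_s(E),\nu_{s'}(E))$ closes the circle. What the paper's more hands-on argument buys is an explicit, constructive modification of $f$ into an indicator achieving at least the same value, which is the form in which the sufficiency of indicators is reused later; but for the statement of this lemma alone your argument is complete, and the only points you must (and do) note are that each $E_t$ is measurable and that the $\dm$-constraint in the definition of $K_\alpha(\dm)$ is vacuous.
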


\begin{lemma}\label{lem:appen:mainbound}
$K_\alpha(\dm)(\nu_s,\nu_{s'}) \le \bd{}$
\end{lemma}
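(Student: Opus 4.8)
The plan is to derive Lemma~\ref{lem:appen:mainbound} from the least-pre-fixed-point characterisation $\bd{}=\min\{d \mid \ga(d)\sqsubseteq d\}$ (Remark~\ref{rem:ufp}). I will show that $K_\alpha(\dm)(\nu_s,\nu_{s'})\le d(s,s')$ holds for any pre-fixed point $d$ of $\ga$ and all $s,s'$; since $\bd{}$ is a fixed point, hence a pre-fixed point, taking $d=\bd{}$ then gives the claim. Throughout I use Lemma~\ref{lem:appen:eqpart} to rewrite $K_\alpha(\dm)(\nu_s,\nu_{s'})=\tv{\alpha}(s,s')=\sup_{E\in\mcal{F}}\Delta_\alpha(\nu_s(E),\nu_{s'}(E))$, which lets me run the argument over measurable events rather than over functions on $\alphabet^\omega$.

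The heart of the proof is a finite-depth induction. For $n\ge 0$ let $\mcal{F}_n$ be the sub-$\sigma$-algebra generated by the cylinders $\cyl_u$ with $|u|\le n$, and put $T_n(s,s')=\sup_{E\in\mcal{F}_n}\Delta_\alpha(\nu_s(E),\nu_{s'}(E))$. I claim $T_n(s,s')\le d(s,s')$ for all $n,s,s'$, by induction on $n$ with the trivial base case $T_0\equiv 0$. If $\lab(s)\ne\lab(s')$ then $\ga(d)\sqsubseteq d$ forces $d(s,s')=1$, and $\Delta_\alpha\le 1$ makes the bound trivial; so assume $\lab(s)=\lab(s')=a$. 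Intersecting with $\cyl_a$ (which changes neither $\nu_s(E)$ nor $\nu_{s'}(E)$), any $E\in\mcal{F}_{n+1}$ may be taken of the form $\{aw\mid w\in E_a\}$ with $E_a\in\mcal{F}_n$, and the one-step recursion $\nu_s(\cyl_{au})=\sum_{v}\mu_s(v)\,\nu_v(\cyl_u)$ gives $\nu_s(E)=\sum_v\mu_s(v)\,f(v)$ for $f(v)=\nu_v(E_a)$, and likewise for $s'$. By the induction hypothesis $\Delta_\alpha(f(v),f(v'))\le T_n(v,v')\le d(v,v')$ for all $v,v'$, so $f$ is feasible in the program defining $K_\alpha(d)(\mu_s,\mu_{s'})$. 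Hence $\Delta_\alpha(\nu_s(E),\nu_{s'}(E))\le K_\alpha(d)(\mu_s,\mu_{s'})=\ga(d)(s,s')\le d(s,s')$, and taking the supremum over $E$ yields $T_{n+1}(s,s')\le d(s,s')$.

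It remains to pass to the limit. Since the finite unions of cylinders form an algebra generating $\mcal{F}$, the standard approximation theorem lets me approximate any $E\in\mcal{F}$ by an $\mcal{F}_n$-set in $(\nu_s+\nu_{s'})$-measure; as $\Delta_\alpha$ is (jointly) continuous, this yields $\tv{\alpha}(s,s')=\sup_{E\in\mcal{F}}\Delta_\alpha(\nu_s(E),\nu_{s'}(E))=\sup_n T_n(s,s')\le d(s,s')$. Combined with the rewriting from Lemma~\ref{lem:appen:eqpart} and the choice $d=\bd{}$, this proves $K_\alpha(\dm)(\nu_s,\nu_{s'})\le\bd{}(s,s')$. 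I expect this final approximation step to be the main obstacle, since one must justify that restricting to finite-depth events does not decrease the supremum. It is worth stressing that, unlike the metric-based proof of the analogue in~\cite{chatzikokolakis2014generalized,xu2015formal}, the induction above never invokes a triangle inequality for $\Delta_\alpha$ (which fails for $\alpha>1$): it relies only on the recursion for $\nu_s$, monotonicity of $\ga$, the pre-fixed-point inequality, and feasibility of the unfolded function $f$, so the absence of metric structure causes no difficulty.
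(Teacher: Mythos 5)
Your proof is correct, and its overall architecture coincides with the paper's: truncate to a finite horizon, induct on the depth, and then pass to the limit by approximating arbitrary measurable events with finite unions of cylinders. Within that shared skeleton, however, your bookkeeping differs in a way worth recording. You first invoke Lemma~\ref{lem:appen:eqpart} to replace the functional formulation $\kanta(\dm)$ by the event formulation $\tv{\alpha}(s,s')=\sup_{E\in\mcal{F}}\Delta_\alpha(\nu_s(E),\nu_{s'}(E))$ and run the induction over the $\sigma$-algebras $\mcal{F}_n$ generated by cylinders of length at most $n$, whereas the paper inducts on $\kanta(\dm^h)$, i.e.\ over functions $f$ constrained by $\Delta_\alpha(f(t),f(t'))\le\dm^h(t,t')$. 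This buys you a genuinely trivial base case ($\mcal{F}_0=\{\emptyset,\alphabet^\omega\}$ and $\Delta_\alpha(0,0)=\Delta_\alpha(1,1)=0$), so you never need the compensating Lemma~\ref{lem:expectzero}, which the paper requires precisely because $\Delta_\alpha(f(t),f(t'))\le 0$ does not force $f$ to be constant. Your induction step is the same computation as the paper's Properties~\ref{prop:faisvalid}--\ref{prop:main}, just phrased directly via the feasible point $f(v)=\nu_v(E_a)$ rather than through the intermediate functions $f_a$ and $g$; and your limit step replaces the two-measure approximation Lemma~\ref{lem:genset} by the standard single-measure result applied to the finite measure $\nu_s+\nu_{s'}$, which is a clean shortcut. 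Two points you should make explicit if you write this up: the one-step unfolding $\nu_s(E)=\sum_v\mu_s(v)\,\nu_v(E_a)$ holds for all $E\in\mcal{F}$ (not just cylinders) because both sides are measures agreeing on the generating cylinders; and your phrasing in terms of an arbitrary pre-fixed point $d$ of $\ga$ is a harmless generalisation of the paper's direct use of $\bd{}$, justified by Remark~\ref{rem:ufp}.
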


\subsection{Definition and Properties of $K_\alpha(\dm)(\nu_s,\nu_{s'})$}
We introduce a generic version of the skewed Kantorovich distance, defined for any object $X$ rather than states $(X = S)$ as defined in the main body.

Given $\mu,\mu' \in Dist(X)$ and $d:X\to[0,1]$ a distance function 
$$K_\alpha(d)(\mu, \mu') = \sup_{\substack{f: X \to [0,1] \\  \forall x,x' \in X \Delta_\alpha(f(x),f(x')) \le d(x,x')  }} \Delta_\alpha(\int f d\mu,\int f d\mu') $$

We can instantiate this on distributions for traces with $X = \alphabet{}^\omega$ so can measure difference between trace distributions for states e.g. $\nu_s,\nu_{s'}$. 

We assume whenever we write $f$ that we restrict only to those which are measurable in our space $(\alphabet^\omega, \mcal{F})$.

Let us define
$\dm(t,t') : \alphabet{}^\omega \times \alphabet{}^\omega \to \{0,1\}$ as the indicator function which is one if the arguments are not the same and zero if they are. We also define $\dm^h$ a restriction considering only the prefix of length $h$, ($t^h$ is the prefix of length $h$ of trace $t$).

\[
\dm(t, t' )= \begin{cases} 1 & \text{if } t \ne t' \\ 0 & \text{otherwise} \end{cases} 
\quad 
\text{ and }
\quad 
\dm^h(t, t' )= \begin{cases} 1 & \text{if } t^h \ne t'^h \\ 0 & \text{otherwise} \end{cases} 
\]

\begin{remark} 
The use of $\dm$ is so that $\Delta_\alpha(f(t),f(t')) \le \dm(t,t)$ is rather no restriction at all, since either a trace is the same, thus $\Delta_\alpha(f(t),f(t')) = 0$ or the traces are different and $\dm(t,t') = 1$. So $K_\alpha(\dm)$ takes supremum over all measurable $f$.
\end{remark}

\begin{remark}
In the following sections we will use the notation $\hat{f}(\mu) = \int f d\mu$.
\end{remark}

\subsection{Proof of Lemma~\ref{lem:appen:eqpart}: $\tv{\alpha}(s,s') = K_\alpha(\dm)(\nu_s,\nu_{s'})$}
Whilst we only really require $\tv{\alpha}(s,s') \le K_\alpha(\dm)(\nu_s,\nu_{s'})$ for our main proof we will show equality, this will give us an alternative representation of $\tv{\alpha}$ that will prove useful along the way.

\begin{claim}
$\tv{\alpha}(s,s') \le K_\alpha(\dm)(\nu_s,\nu_{s'})$
\end{claim}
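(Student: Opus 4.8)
The plan is to prove this inequality by exhibiting, for every measurable event $E \in \mcal{F}$, a feasible test function in the supremum defining $\kanta(\dm)(\nu_s,\nu_{s'})$ whose objective value is exactly $\Delta_\alpha(\nu_s(E),\nu_{s'}(E))$. Taking the supremum over all such $E$ then yields the claim, since $\tva(s,s')$ is by definition precisely the supremum of these quantities over $E \in \mcal{F}$.

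The first step is to observe, as in the remark above, that the constraint $\forall t,t'.\ \Delta_\alpha(f(t),f(t')) \le \dm(t,t')$ imposes nothing on a measurable $f:\alphabet^\omega \to [0,1]$. If $t = t'$, then $\dm(t,t') = 0$ and $\Delta_\alpha(f(t),f(t)) = \max\{(1-\alpha)f(t),\,0\} = 0$, because $\alpha \ge 1$ and $f(t) \ge 0$. If $t \ne t'$, then $\dm(t,t') = 1$, while $\Delta_\alpha(f(t),f(t')) \le \max\{f(t),f(t')\} \le 1$, because $f$ takes values in $[0,1]$ and the subtracted terms $\alpha f(t')$, $\alpha f(t)$ are non-negative. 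Hence every measurable $f$ into $[0,1]$ is feasible.

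The key step is then to take $f = \ind{E}$, the indicator function of $E$. Since $E$ is measurable, $\ind{E}$ is a measurable function into $\{0,1\} \subseteq [0,1]$, so by the previous paragraph it is feasible. Moreover $\hat{f}(\nu_s) = \int \ind{E}\, d\nu_s = \nu_s(E)$ and likewise $\hat{f}(\nu_{s'}) = \nu_{s'}(E)$, so the objective attains the value $\Delta_\alpha(\nu_s(E),\nu_{s'}(E))$ at this $f$. Consequently $\Delta_\alpha(\nu_s(E),\nu_{s'}(E)) \le \kanta(\dm)(\nu_s,\nu_{s'})$ for each $E$, and taking the supremum over $E \in \mcal{F}$ gives $\tva(s,s') \le \kanta(\dm)(\nu_s,\nu_{s'})$.

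I expect no serious obstacle in this direction: everything reduces to checking feasibility of indicator functions, which is immediate from the vacuity of the $\dm$ constraint. The genuinely harder work lies in the reverse inequality needed for the full equality of Lemma~\ref{lem:appen:eqpart}, where one must show that an arbitrary feasible $f$ cannot beat the best indicator; this presumably proceeds via a superlevel-set (layer-cake) decomposition, writing $\int f\, d\nu = \int_0^1 \nu(\{f \ge \lambda\})\, d\lambda$ and exploiting the positive homogeneity of $\Delta_\alpha$, but it is not required for the present claim.
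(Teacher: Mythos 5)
Your proposal is correct and follows essentially the same route as the paper: the paper's proof likewise observes that the $\dm$ constraint is vacuous (this is noted in a preceding remark) and then plugs the indicator $\ind{E}$ of each measurable $E$ into the supremum, using $\int \ind{E}\,d\nu_s = \nu_s(E)$. Your write-up merely spells out the feasibility check and the final supremum step that the paper leaves implicit.
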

\begin{proof}
Consider $E \in \mcal{F}$ then $\ind{E}$ is a measurable function in the argument of the supremum of $K_\alpha(\dm)$ with $\int \ind{E} d\nu_s = \nu_s(E)$ thus $\Delta_\alpha(\nu_s(E), \nu_{s'}(E)) = \Delta_\alpha(\int \ind{E} d\nu_s,\int \ind{E} d\nu_{s'})$.
\end{proof}

\begin{claim}
 $\tv{\alpha}(s,s') \ge K_\alpha(\dm)(\nu_s,\nu_{s'})$
 \end{claim}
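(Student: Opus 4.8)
The plan is to exploit the fact, noted in the preceding remark, that the constraint $\Delta_\alpha(f(t),f(t')) \le \dm(t,t')$ is vacuous: whenever $t\ne t'$ it reads $\Delta_\alpha(f(t),f(t'))\le 1$, which always holds for $f$ valued in $[0,1]$, and when $t=t'$ both sides are $0$. Hence $K_\alpha(\dm)(\nu_s,\nu_{s'})$ is simply the supremum of $\Delta_\alpha(\hat f(\nu_s),\hat f(\nu_{s'}))$ over \emph{all} measurable $f:\alphabet^\omega\to[0,1]$, and it suffices to show that each such $f$ satisfies $\Delta_\alpha(\hat f(\nu_s),\hat f(\nu_{s'}))\le \tv{\alpha}(s,s')$; taking the supremum then yields the claim.

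The key tool I would use is the layer-cake (Cavalieri) representation of the integral of a $[0,1]$-valued function: for any measure $\nu$ on $(\alphabet^\omega,\mcal{F})$ and any measurable $f:\alphabet^\omega\to[0,1]$,
\[
\hat f(\nu) \;=\; \int_0^1 \nu(E_\lambda)\, d\lambda, \qquad E_\lambda \;=\; \{\, t\in\alphabet^\omega \mid f(t) > \lambda\,\}\in\mcal{F}.
\]
This rewrites the integral as an average (over the threshold $\lambda$) of measures of genuine events $E_\lambda$, which is exactly what lets me reduce an arbitrary $f$ to the indicator functions already handled by the opposite inequality.

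Because $\Delta_\alpha(x,y)=\max\{x-\alpha y,\,y-\alpha x,\,0\}$ is built from expressions that are linear in the pair $(x,y)$, I can push the two linear parts through the integral. For the first one,
\[
\hat f(\nu_s)-\alpha\,\hat f(\nu_{s'}) \;=\; \int_0^1 \big(\nu_s(E_\lambda)-\alpha\,\nu_{s'}(E_\lambda)\big)\, d\lambda \;\le\; \int_0^1 \tv{\alpha}(s,s')\, d\lambda \;=\; \tv{\alpha}(s,s'),
\]
where the inequality uses $\nu_s(E_\lambda)-\alpha\,\nu_{s'}(E_\lambda)\le \Delta_\alpha(\nu_s(E_\lambda),\nu_{s'}(E_\lambda))\le \tv{\alpha}(s,s')$ for every $\lambda$. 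The symmetric expression $\hat f(\nu_{s'})-\alpha\,\hat f(\nu_s)$ is bounded by $\tv{\alpha}(s,s')$ in the same way, and $0\le\tv{\alpha}(s,s')$ trivially, so the maximum of the three, namely $\Delta_\alpha(\hat f(\nu_s),\hat f(\nu_{s'}))$, is at most $\tv{\alpha}(s,s')$.

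The main obstacle is the conceptual step of the first paragraph: recognising that no measurable $[0,1]$-valued function can beat the indicators of events, so that the richer supremum defining $K_\alpha(\dm)$ collapses to the one defining $\tv{\alpha}$. Everything else is routine once the layer-cake identity is invoked; the only points needing mild care are the measurability of each threshold set $E_\lambda$ (immediate from measurability of $f$) and the fact that the bound on $\nu_s(E_\lambda)-\alpha\,\nu_{s'}(E_\lambda)$ holds pointwise in $\lambda$, so that the interchange with the integral requires no limiting argument.
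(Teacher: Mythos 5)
Your proof is correct, but it takes a genuinely different route from the paper's. The paper first reduces to simple functions $f=\sum_i a_i \ind{A_i}$ and shows, level set by level set, that rounding each value $a_i\in(0,1)$ to $1$ or $0$ (according to the sign of $\mu(A_i)-\alpha\mu'(A_i)$) never decreases $\hat f(\mu)-\alpha\hat f(\mu')$; it then passes to arbitrary measurable $f$ by pointwise approximation with simple functions and a monotone-convergence argument. You instead invoke the layer-cake identity $\hat f(\nu)=\int_0^1\nu(E_\lambda)\,d\lambda$ with $E_\lambda=\{t\mid f(t)>\lambda\}\in\mcal{F}$, push the two linear branches of $\Delta_\alpha$ through the integral, and bound each integrand by $\tv{\alpha}(s,s')$ pointwise in $\lambda$. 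Both arguments rest on the same conceptual observation (the constraint imposed by $\dm$ is vacuous, so the supremum ranges over all measurable $[0,1]$-valued functions, and indicators of events are the extremal case), but your version is shorter and sidesteps the limiting step entirely, since the reduction to indicators is packaged inside a single standard identity; the only additional fact you rely on, the measurability of $\lambda\mapsto\nu(E_\lambda)$, is immediate from monotonicity. What the paper's longer route buys is an explicit structural statement (its Corollary that the supremum may be restricted to $\{0,1\}$-valued $f$ with no constraints), which it reuses later; in your setup that corollary still follows, but only after combining your inequality with the converse one.
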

\begin{proof}
\label{proof:tvisk}
We have shown a 1-1 correspondence between indicator functions and events. We further show any function that is not an indicator function does not contribute to the supremum. We proceed first by considering simple functions and extend this to other functions with the Monotone convergence theorem. Our proof follows the technique of \cite{xu2015formal}, however we define our new function explicitly.

$$ K_\alpha(\dm)(\mu, \mu')= \max \sup_{f: A^\omega \to [0,1]}\{\hat{f}(\mu) - \alpha \hat{f}(\mu') \},  \sup_{f: A^\omega \to [0,1]}\{\hat{f}(\mu') - \alpha \hat{f}(\mu) \}$$

This is symmetric, so without loss of generality we consider only $\hat{f}(\mu) - \alpha \hat{f}(\mu')$ as our distance.

A simple function $f$ can be described by a finite sum $\sum_i a_i \ind{A_i}$, where $\ind{A_i}$ is the characteristic function on $A_i \in \mcal{F}$. This means $f$ takes on finitely many values in its range.

We want to show that if $f(a) \in (0,1)$ then we can only increase the value of $\hat{f}(\mu) - \alpha \hat{f}(\mu')$ by changing to $f: A^\omega \to \{0,1\}$. Note this will maintain that $f$ valid in the restriction of the sup, since we only require to stay measurable.

Since we have $f$ simple then $img(f)$ is finite. Let $n = |img(f) \setminus \{0,1\}|$

Consider $v \in img(f_i) \setminus \{0,1\}$. Let $A_i = \{x : f(x) = v\}$

Then we can define $f_{i+1} = f_{i} + g_{i}$ with $g_{i} = t \times \mathbb{1}_{A_i}$ where $t$ is defined as follows:

if $\mu(A_i) - \alpha \mu'(A_i) \ge 0$ we set $t = 1-v$, this gives us $f_{i+1}(x) = 1$ for all $x \in A_i $. Otherwise $t = -v$ so $ f_{i+1}(x) = 0$ for all $x \in A_i$.

We show $f_{i+1}$ gives no smaller value in the supremum.
\begin{align*}
\hat{f}_{i+1}(\mu) - \alpha \hat{f}_{i+1}(\mu') &= \int f_{i+1}d(\mu) - \alpha \int  f_{i+1} d(\mu')\\
&= \int f_{i} + g_i d(\mu) - \alpha \int  f_{i+1} + g_i d(\mu') \\
&= \int f_{i}  d(\mu) - \alpha \int f_{i}d(\mu') +  \int g_i d(\mu) - \alpha  \int  g_i d(\mu') \\ 
&= \hat{f}_{i}(\mu) - \alpha \hat{f}_{i} (\mu') +  t \mu(A_i) - \alpha t \mu'(A_i)  \\
&\ge \hat{f}_{i}(\mu) - \alpha \hat{f}_{i} (\mu') \tag{Either $t>0 $ and $\mu(A_i) - \alpha \mu'(A_i) \ge 0$ or $t<0 $ and $\mu(A_i) - \alpha \mu'(A_i) < 0$}
\end{align*}
In particular the indicator $f_n$  is not worse than the simple function $f$, that is  
$$\hat{f}_{n}(\mu) - \alpha \hat{f}_{n}(\mu') \ge \hat{f}(\mu) - \alpha \hat{f}(\mu').$$

Consider for $f$ not simple, which can be approximated by $h_1, h_2, \dots$ simple, converging point-wise to $f$. Then by monotone convergence principle, with $d(a,b) = a - \alpha b$ continuous.

$$\lim_{n \to \infty} \hat{h}_n(\mu) - \alpha \hat{h}_n(\mu') = \hat{f}(\mu) - \alpha \hat{f}(\mu').$$

For every simple $h_i$, we are not smaller than an indicator function. Since it holds for each $n$ we are no larger than an indicator function, the limit is also no larger than the supremum of indicator functions. Therefore indicator functions, which have 1-1 correspondence with events are sufficient in the supremum.
\qed
\end{proof}

\begin{corollary}\label{cor:indicator}
When using $\dm$ over trace distributions we need only consider the supremum over indicator functions with no restrictions.
$$K_\alpha(\dm)(\mu, \mu') = \sup_{\substack{f: X \to \{0,1\} }} \Delta_\alpha(\int f d\mu,\int f d\mu') $$
\end{corollary}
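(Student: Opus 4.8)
The plan is to read the statement as saying that, for the particular distance $\dm$, the skewed Kantorovich supremum defining $K_\alpha(\dm)(\mu,\mu')$, which a priori ranges over all measurable $f\colon X\to[0,1]$ subject to $\Delta_\alpha(f(t),f(t'))\le\dm(t,t')$, collapses both to a constraint-free supremum and to one taken only over $\{0,1\}$-valued functions. I would obtain this in two moves: first discard the constraint, then show that indicator functions are extremal.

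For the first move I would invoke the Remark following the definition of $\dm$ and $\dm^h$. Since $\alpha\ge 1$ gives $\Delta_\alpha(v,v)=\max\{(1-\alpha)v,\,0\}=0$, the constraint is automatically satisfied whenever $t=t'$; and whenever $t\ne t'$ we have $\dm(t,t')=1\ge\Delta_\alpha(f(t),f(t'))$ because $f$ is $[0,1]$-valued (so $f(t)-\alpha f(t')\le f(t)\le 1$). Hence the constraint is vacuous and $K_\alpha(\dm)(\mu,\mu')$ equals the unrestricted supremum of $\Delta_\alpha(\int f\,d\mu,\int f\,d\mu')$ over all measurable $f\colon X\to[0,1]$.

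For the second move, the inequality $\sup_{f\colon X\to\{0,1\}}\le\sup_{f\colon X\to[0,1]}$ is immediate, since indicators form a subclass. The reverse inequality is precisely what is established in the proof of the reverse direction of Lemma~\ref{lem:appen:eqpart}: by symmetry of $\Delta_\alpha$ one reduces to the single objective $\hat f(\mu)-\alpha\hat f(\mu')$; for a simple $f$ one rounds each value $v\in(0,1)$ on its level set $A=f^{-1}(v)$ to $0$ or $1$ by adding $t\cdot\ind{A}$ with $t=1-v$ or $t=-v$ according to the sign of $\mu(A)-\alpha\mu'(A)$, which changes the objective by $t\,(\mu(A)-\alpha\mu'(A))\ge 0$; finitely many such steps yield a $\{0,1\}$-valued function of value at least that of $f$. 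Combining the two inequalities proves the claimed equality on simple functions.

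Since the rounding argument is already carried out in the excerpt, the only genuine obstacle is the passage from simple to arbitrary measurable $f$, and here I would reuse the monotone-convergence step of that same proof: approximate $f$ from below by simple $h_n\uparrow f$, note $\hat h_n(\mu)-\alpha\hat h_n(\mu')\le\sup_{\{0,1\}}$ for each $n$, and use continuity of $g\mapsto\hat g(\mu)-\alpha\hat g(\mu')$ under the limit to conclude $\hat f(\mu)-\alpha\hat f(\mu')\le\sup_{\{0,1\}}$. The subtle point to watch is that the dominating indicator depends on the approximant, so one must bound the limit by the uniform value $\sup_{\{0,1\}}$ rather than attempt to extract a single limiting optimiser; with that bound in hand, taking the supremum over $f$ yields $\sup_{[0,1]}\le\sup_{\{0,1\}}$ and hence the corollary.
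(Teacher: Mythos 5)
Your proposal is correct and follows essentially the same route as the paper: the constraint $\Delta_\alpha(f(t),f(t'))\le\dm(t,t')$ is vacuous for $[0,1]$-valued $f$ (as noted in the paper's remark following the definition of $\dm$), and the reduction to $\{0,1\}$-valued functions is exactly the rounding-plus-monotone-convergence argument from the proof of the second claim of Lemma~\ref{lem:appen:eqpart}. Your explicit observation that one must bound the limit by the uniform supremum over indicators rather than extract a single limiting optimiser is a correct and welcome clarification of a point the paper leaves implicit.
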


\subsection{Proving Lemma~\ref{lem:appen:mainbound}: $K_\alpha(\dm)(\nu_s,\nu_{s'}) \le \bd{}$}

Our proof strategy generally follows the strategy of \cite{xu2015formal}. We will show for all prefixes of a trace $h$ restricting $\dm$ to the prefix will provide the required bound for all $h$. The main change in this argument is different in the base case, where an additional result (Lemma~\ref{lem:expectzero}) 
is needed to compensate for the fact that $\Delta_\alpha$ is not a metric. The induction step is similar, changing only our distance function $\Delta_\alpha$.

We will then extend $\dm^h$ to the supremum over $h$, i.e. $K_\alpha(\dm)$. To do this, we will approximate the events in $K_\alpha(\dm)$ by cylinders and show that for some $h$, $K_\alpha(\dm^h)$ is $\epsilon$ close to $K_\alpha(\dm)$. An additional lemma to support this is provided in Lemma~\ref{lem:genset} and the extension is shown in Lemma~\ref{lem:indenough}.

This extension from $K_\alpha(\dm^h)$ to $K(\dm)$ differs from the strategy of \cite{xu2015formal} who argue it is enough to show continuity of $K_V(m)$ with respect to $m$. Continuity \textit{does} hold in our case, but it is unclear to us that this actually shows the result. Since $\dm^h$ are discrete the $\epsilon, \delta$ formulation of continuity says at $\dm$ there exists $m$ such that $\max_{a,b} |\dm(a,b)  - m(a,b)| \le \delta$, however no such $m = \dm^h$ satisfies this unless it is exactly $\dm$ as if it differs in any point, the difference is 1.

\subsubsection{Additional Lemmas}

Note that in our case, given $\Delta_\alpha(f(x),f(x'))=0$ we cannot conclude $f(x)=f(x')$, because $\Delta_\alpha$ is not a metric. 
To compensate for this, we show a weaker result that still holds.
\begin{lemma}
\label{lem:expectzero}
Consider $f:X\to [0,1]$ such that $\forall x,x': \Delta_\alpha(f(x),f(x')) = 0 $.
Then $\forall \mu, \mu': \Delta_\alpha(\hat{f}(\mu),\hat{f}(\mu')) = 0$.
\end{lemma}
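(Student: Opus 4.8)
The plan is to reduce the hypothesis to a single scalar inequality relating the infimum and supremum of $f$, and then to exploit the fact that integrals against probability measures are averages, hence trapped between these extreme values. Concretely, I would first unpack the assumption $\Delta_\alpha(f(x),f(x'))=0$: by the definition $\Delta_\alpha(a,b)=\max\{a-\alpha b,\,b-\alpha a,\,0\}$, this is equivalent to $f(x)-\alpha f(x')\le 0$ and $f(x')-\alpha f(x)\le 0$, i.e. $f(x)\le \alpha f(x')$, holding for \emph{all} pairs $x,x'\in X$. Taking the supremum over $x$ and the infimum over $x'$ yields the key scalar bound
\[
\sup_{x\in X} f(x)\ \le\ \alpha\,\inf_{x'\in X} f(x').
\]

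Next I would observe that, because $\mu$ and $\mu'$ are probability measures and $\inf f\le f\le \sup f$ pointwise, monotonicity of the integral gives
\[
\inf_{x'} f(x')\ \le\ \hat f(\mu),\,\hat f(\mu')\ \le\ \sup_{x} f(x).
\]
Combining the two displays, I get $\hat f(\mu)-\alpha\hat f(\mu')\le \sup_x f(x)-\alpha\inf_{x'}f(x')\le 0$, and by the symmetric argument $\hat f(\mu')-\alpha\hat f(\mu)\le 0$ as well. Hence both non-trivial arguments of the maximum defining $\Delta_\alpha(\hat f(\mu),\hat f(\mu'))$ are $\le 0$, so $\Delta_\alpha(\hat f(\mu),\hat f(\mu'))=\max\{\,\cdot\,,\,\cdot\,,0\}=0$, which is exactly the claim.

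There is no serious obstacle here; the only point requiring care is the measurability/integrability bookkeeping, namely that $f$ is bounded and measurable (as assumed throughout this section), so $\hat f(\mu)=\int f\,d\mu$ is well-defined and lies in $[\inf f,\sup f]$ for every probability measure. The conceptual content is simply that the hypothesis forces $f$ to have small \emph{oscillation} in the $\alpha$-skewed sense ($\sup f\le\alpha\inf f$), and averaging cannot increase this spread; this is precisely the weaker substitute for ``$f$ constant'' that one would obtain in the metric case $\alpha=1$ (where $\sup f\le\inf f$ forces $f$ to be constant and the statement is trivial).
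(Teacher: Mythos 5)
Your proof is correct and follows essentially the same route as the paper's: both reduce the hypothesis to the scalar bound $\sup f \le \alpha \inf f$ (the paper phrases this as $f(x)\in[a,\alpha a]$ with $a=\inf f$) and then use that integration against a probability measure keeps $\hat f$ in that interval. If anything, your version of taking $\sup$ and $\inf$ over the pointwise inequality $f(x)\le\alpha f(x')$ is slightly cleaner than the paper's step of plugging the (possibly unattained) infimum directly into $\Delta_\alpha$ as if it were a value of $f$.
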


\begin{proof}

We consider the range of values $f$ could take, first consider $a = \inf_x \{f(x)\}$

We know that for all $x$, $\Delta_\alpha(f(x),a) \le 0$. So $f(x)-\alpha a \le 0$ and $a-\alpha f(x) \le 0$.

$f(x) \le \alpha a$ and we known $f(x)\ge a$ by definition so $f(x) \in [a, \alpha a] $ for all $x$.

Now consider $\int_{X} f d\mu $ where $f(x) \in [a, \alpha a]$ and $\int_{X} d\mu = 1$. 
\begin{align*}
\hat{f}(\mu) &= \int_{X} f d\mu 
\ge \int_{X} a d\mu = a \int_{X} d\mu
= a\\
\hat{f}(\mu) &= \int_{X} f d\mu
\le \int_{X} \alpha a d\mu 
=  \alpha a \int_{X}  d\mu
= \alpha  a
\end{align*}

So for all $x$ and $\mu$ then $\hat{f}(\mu) = \int_{X} f d\mu \in [a, \alpha a]$, telling us the expectation must also lie in this range. Next we notice that any two numbers in this range give distance zero, in particualr the expections.

Consider $x,y \in [a, \alpha a]$ 
\begin{align*}
    \Delta_\alpha(x,y) &\le \max(x - \alpha y, y - \alpha x, 0) \\
    &\le \max(\alpha a -  \alpha a, \alpha a - \alpha a, 0) \\
    &= \max(0,0,0) = 0
\end{align*}

Therefore $\Delta_\alpha(\hat{f}(\mu),\hat{f}(\mu_s)) = 0$.

\qed
\end{proof}

The following lemma generalises the classic result that measure on any measurable event can be approximated by events from the generating set. In our case, the generating set (algebra) corresponds to finite unions of cylinders, which are themselves determined by sequences from $\Sigma^\ast$.
We show that simultaneous approximation to the same degree of accuracy is  possible for two different measures too.

\begin{lemma} 
\label{lem:genset}
Let $(B, \mcal{B}, \mu)$ and $(B, \mcal{B}, \mu')$ be measure spaces over the $\sigma$-algebra $(B, \mcal{B})$.  Let $\mcal{A} \subset \mcal{B} $ be an algebra generating $\mcal{B}$.
$S = \{ X \in \mathcal B \quad | \quad  \forall \epsilon \  \exists A \in \mathcal A \text{ such that } \mu(A \triangle X)<\epsilon \text{ and } \mu'(A \triangle X)< \epsilon \}$ also forms a $\sigma$-algebra.
\end{lemma}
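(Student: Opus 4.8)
The plan is to verify directly the three defining closure properties of a $\sigma$-algebra for $S$, using only elementary identities for symmetric differences together with the continuity of finite measures. Since in our application $\mu,\mu'$ are probability measures (hence finite), it is convenient to first observe that $S$ is unchanged if we replace the two simultaneous conditions by a single one with respect to the finite measure $\rho = \mu + \mu'$: indeed $\rho(A\triangle X)<\epsilon$ implies both $\mu(A\triangle X)<\epsilon$ and $\mu'(A\triangle X)<\epsilon$, while conversely two bounds by $\epsilon/2$ give $\rho(A\triangle X)<\epsilon$, and the quantifier $\forall\epsilon\,\exists A$ makes the two descriptions coincide. Thus it suffices to show that $S=\{X\in\mathcal B \mid \forall\epsilon\,\exists A\in\mathcal A,\ \rho(A\triangle X)<\epsilon\}$ is a $\sigma$-algebra, which is the classical statement that the $\rho$-approximable sets form a $\sigma$-algebra.

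For the first two axioms I would argue as follows. Since $\mathcal A$ is an algebra it contains $B$, and $B\triangle B=\emptyset$, so $B\in S$. For closure under complement, note the identity $A^c\triangle X^c = A\triangle X$; given $X\in S$ and $\epsilon$, pick $A\in\mathcal A$ with $\rho(A\triangle X)<\epsilon$, and then $A^c\in\mathcal A$ witnesses $\rho(A^c\triangle X^c)=\rho(A\triangle X)<\epsilon$, so $X^c\in S$. Closure under finite unions follows from the inclusion $(A_1\cup A_2)\triangle(X_1\cup X_2)\subseteq (A_1\triangle X_1)\cup(A_2\triangle X_2)$: choosing $A_i\in\mathcal A$ with $\rho(A_i\triangle X_i)<\epsilon/2$ and using subadditivity of $\rho$ gives $\rho\big((A_1\cup A_2)\triangle(X_1\cup X_2)\big)<\epsilon$, with $A_1\cup A_2\in\mathcal A$.

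The remaining and main point is closure under countable unions. Given $X_1,X_2,\dots\in S$ with $X=\bigcup_n X_n$, set $Y_N=\bigcup_{k\le N}X_k$; by the finite-union case each $Y_N\in S$, and $Y_N\uparrow X$. Here finiteness of $\rho$ is essential: by continuity from below $\rho(Y_N)\to\rho(X)$, so for any $\epsilon$ there is $N$ with $\rho(X\setminus Y_N)<\epsilon/2$. Since $Y_N\in S$, choose $A\in\mathcal A$ with $\rho(A\triangle Y_N)<\epsilon/2$; then, using $A\triangle X\subseteq (A\triangle Y_N)\cup(Y_N\triangle X)$ and $Y_N\triangle X=X\setminus Y_N$, subadditivity yields $\rho(A\triangle X)<\epsilon$, so $X\in S$. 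The one genuine obstacle is precisely this passage from finite to countable unions: the approximating algebra $\mathcal A$ need not be closed under countable operations, so one cannot approximate $X$ directly and must instead cut off an arbitrarily small tail (controlled by continuity of the finite measure) and approximate the resulting finite union. The simultaneous handling of the two measures, by contrast, is not a real difficulty and is dispatched entirely by the reduction to $\rho$.
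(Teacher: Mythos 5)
Your proof is correct, and its skeleton (direct verification of the three axioms, with the countable-union case as the only real work) matches the paper's. Two genuine differences are worth noting. First, you reduce the two simultaneous approximation conditions to a single one for the finite measure $\rho=\mu+\mu'$; the paper instead carries both measures in parallel through every case, choosing each approximating set to work for $\mu$ and $\mu'$ at once. Your reduction is cleaner and isolates the only new content of the lemma (the ``two measures'' aspect) in one line, at the mild cost of assuming finiteness up front --- which is harmless here, since the lemma is applied only to the probability measures $\nu_s,\nu_{s'}$, and the paper's own argument also invokes finiteness. Second, for countable unions you use the increasing truncations $Y_N=\bigcup_{k\le N}X_k$ and continuity from below, whereas the paper first passes to pairwise disjoint $X_k$ and controls the tail $\bigcup_{j>N}X_j$ via convergence of $\sum_j\mu(X_j)$; your variant avoids the (unstated in the paper) preliminary disjointification step, which would itself require closure under complements and finite unions to justify. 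Both routes rest on the same idea --- cut off a small tail using finiteness of the measure and approximate the remaining finite union --- so the arguments are interchangeable, but yours is the more self-contained write-up.
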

\begin{proof}
We adapt the proof from \url{https://math.stackexchange.com/questions/228998/approximating-a-sigma-algebra-by-a-generating-algebra} to the case where there are two measures.

To show that a set $S$ is a $\sigma$-algebra we require the inclusion of the empty set, closure under complement, and closure under countable unions. We show the full set rather than the empty set in Case 1, which shows the empty set by complement in Case 2. We show finite unions in Case 3 and extend to countable union in Case 4. 

\begin{case}
Since $B \in \mcal{A}$, $B\in \mcal{S}$.
\end{case}
\begin{case} Complement:
If $X \in \mcal{S}$ and $\epsilon > 0$ then there $\exists A \in \mathcal A \text{ such that } \mu(A \triangle X)<\epsilon \text{ and } \mu'(A \triangle X)< \epsilon$

Then $A^c \in \mcal{A}$ and $\mu(A^c \triangle X^c) = \mu(A \triangle X) <\epsilon$, so $X^c \in S$
\end{case}

\begin{case}Finite Union:
Let $X_1, X_2 \in S$ then  $\exists A_i \in \mathcal A \text{ such that } \mu(A_i \triangle X_i)<\frac{\epsilon}{2} \text{ and } \mu'(A_i \triangle X_i)< \frac{\epsilon}{2}$.

$\mu(X_1 \cup X_2 \triangle A_1 \cup A_2) \le \mu(X_1 \cup A_1 \triangle X_2 \cup A_2) \le \frac{\epsilon}{2} + \frac{\epsilon}{2}  = \epsilon$, similarly for $\mu'$ and $A_1 \cup A_2 \in \mcal{A}$ so $X_1 \cup X_2 \in S$.
\end{case}

\begin{case} Countable Union.
Let $\{X_k\} \subset S$, pairwise disjoint and $\epsilon > 0$. For each $k$, let $A_k \in \mcal{A}$ such that $\mu(A_k \triangle X_k) \le \frac{\epsilon}{2^k}$ and $\mu'(A_k \triangle X_k) \le \frac{\epsilon}{2^k}$

Take N such that $\mu(\bigcup_{j > N}X_j) \le \frac{\epsilon}{2}$ and $\mu'(\bigcup_{j > N}X_j) \le \frac{\epsilon}{2}$. (Make $\mu(\bigcup_{j > N}X_j) \le \sum_{j > N} \mu(X_j)$ arbitrarily small due to finite  measure, take $N$ big enough so both are $\mu$ and $\mu'$ sufficiently small).

Let $A = \bigcup\limits_{j=1}^{N} A_j \in \mcal{A}$.

Then $(\bigcup_k X_k)\triangle A \subset \bigcup\limits_{j=1}^{N}(X_j\triangle A_j) \cup \bigcup_{j > N} X_j$

Then $\mu(\bigcup_k X_k)\triangle A ) \le \mu(\bigcup\limits_{j=1}^{N}(X_j\triangle A_j) \cup \bigcup_{j > N} X_j) \le \sum_{j=1}^{N} \frac{\epsilon}{2^j} + \frac{\epsilon}{2} \le \epsilon$ and also for $\mu'$.

Then $\bigcup_k X_k \in S$

\end{case}
\qed
\end{proof}

Our stratergy will be to prove the result by induction, we show this is sufficient to extend to $K_\alpha(\dm)$ in the following lemma, using Lemma~\ref{lem:genset}.

\begin{lemma}
\label{lem:indenough}
We show if $K_\alpha(\dm^h) \le \bd{}$ for all $h$ then $K_\alpha(\dm) \le \bd{\alpha}$.
\end{lemma}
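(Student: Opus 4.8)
The plan is to pass from the finite-depth bounds to the full distance by approximating arbitrary measurable events by finite unions of cylinders, \emph{simultaneously} with respect to both $\nu_s$ and $\nu_{s'}$. First I would invoke Corollary~\ref{cor:indicator} to rewrite $K_\alpha(\dm)(\nu_s,\nu_{s'}) = \sup_{E \in \mcal{F}} \Delta_\alpha(\nu_s(E),\nu_{s'}(E))$, so that it suffices to bound $\Delta_\alpha(\nu_s(E),\nu_{s'}(E))$ by $\bd{\alpha}(s,s')$ for every fixed $E \in \mcal{F}$ and then take the supremum.

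Next, fix $E$ and $\epsilon > 0$. Applying Lemma~\ref{lem:genset} with the algebra $\mcal{A}$ of finite unions of cylinders (which generates $\mcal{F}$), the set of events approximable in both measures is a $\sigma$-algebra containing $\mcal{A}$, hence all of $\mcal{F}$; therefore there is a finite union of cylinders $A$ with $\nu_s(A \triangle E) < \epsilon$ and $\nu_{s'}(A \triangle E) < \epsilon$. Such an $A$ is determined by the first $h$ symbols of a trace, where $h$ is the maximal length of the prefixes defining the constituent cylinders. Consequently $\mathbb{1}_A$ is constant on traces sharing a length-$h$ prefix, so it is a feasible test function for $K_\alpha(\dm^h)$: whenever $t^h = t'^h$ we have $\dm^h(t,t') = 0$ and $\Delta_\alpha(\mathbb{1}_A(t),\mathbb{1}_A(t')) = 0$, and otherwise $\dm^h(t,t') = 1$ dominates. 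Hence $\Delta_\alpha(\nu_s(A),\nu_{s'}(A)) \le K_\alpha(\dm^h)(\nu_s,\nu_{s'}) \le \bd{\alpha}(s,s')$ by the induction hypothesis.

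Finally, I would transfer this bound from $A$ back to $E$ using that $\Delta_\alpha$ is Lipschitz. Since $|\nu_s(E) - \nu_s(A)| \le \nu_s(E \triangle A) < \epsilon$ and likewise for $\nu_{s'}$, and since $\Delta_\alpha(x,y) = \max(x - \alpha y,\, y - \alpha x,\, 0)$ is a maximum of affine maps, perturbing each argument by at most $\epsilon$ changes $\Delta_\alpha$ by at most $(1+\alpha)\epsilon$. Thus $\Delta_\alpha(\nu_s(E),\nu_{s'}(E)) \le \bd{\alpha}(s,s') + (1+\alpha)\epsilon$, and letting $\epsilon \to 0$ and taking the supremum over $E \in \mcal{F}$ yields $K_\alpha(\dm)(\nu_s,\nu_{s'}) \le \bd{\alpha}(s,s')$, as required.

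The main obstacle is the second step: a single cylinder approximant must be good for both measures at once, since $\Delta_\alpha$ couples $\nu_s$ and $\nu_{s'}$. This is precisely why Lemma~\ref{lem:genset} is needed---a one-measure approximation would not let us control both $\nu_s(E \triangle A)$ and $\nu_{s'}(E \triangle A)$ simultaneously, and we cannot simply separately round each measure. The remaining ingredients, namely the feasibility of $\mathbb{1}_A$ as a depth-$h$ test function and the Lipschitz estimate for $\Delta_\alpha$, are routine once the common approximant $A$ is in hand.
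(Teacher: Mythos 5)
Your proof is correct and follows essentially the same route as the paper's: reduce to events via Corollary~\ref{cor:indicator}, obtain a simultaneous cylinder approximant from Lemma~\ref{lem:genset}, observe that its indicator is feasible for $K_\alpha(\dm^h)$ at the padding depth $h$, and absorb the approximation error via the $(1+\alpha)$-Lipschitz behaviour of $\Delta_\alpha$. The only difference is bookkeeping --- you bound every event uniformly by $\bd{\alpha}(s,s')+(1+\alpha)\epsilon$ and let $\epsilon\to 0$, whereas the paper first fixes a near-optimal indicator and tracks explicit $\epsilon/2$ and $\epsilon/(4\alpha)$ budgets --- which is, if anything, slightly cleaner.
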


\begin{proof}

We now extend to the result from all $h$ to $K_\alpha(\dm)$ by showing that $\forall \epsilon \ \exists h$ such that $K_\alpha(\dm) - K_\alpha(\dm^h) \le \epsilon$ (notice $K_\alpha(\dm)$ is always larger).

Either 
$$K_\alpha(\dm)(\nu_s, \nu_{s'}) = \sup_{f:\alphabet^\omega \to \{0,1\}} \int fd\nu_{s} - \alpha\int f d\nu_{s'} = V$$
or
$$K_\alpha(\dm)(\nu_s, \nu_{s'}) = \sup_{f:\alphabet^\omega \to \{0,1\}} \int f d\nu_{s'} - \alpha\int f d\nu_{s} = V$$

Assume wlog that we are in the first case and remember by Corollary~\ref{cor:indicator}, $f:\alphabet^\omega \to \{0,1\}$ are indicator functions. Consider such an $f$ with $|V - (\int fd\nu_{s} - \alpha\int f d\nu_{s'})| \le \epsilon/2$ (made possible by supremum).

If we show that $\forall \epsilon \ \exists h$ such that $\int fd\nu_{s} - \alpha\int f d\nu_{s'} - K(\dm^h)(\nu_s,\nu_{s'}) \le \epsilon /2$ then $V - K(\dm^h)(\nu_s,\nu_{s'}) \le \epsilon$ and we are done.

Let $A = \{x \in A^\omega \ | \ f(x) = 1\}$. Since $f$ measurable then $A \in \setofmeasurables$ then:

$$ \int fd\nu_{s} - \alpha\int f d\nu_{s'} = \nu_s(A) - \alpha\nu_{s'}(A)$$

By Lemma~\ref{lem:genset}, we know $\nu_s(A)$ and $\nu_{s'}(A)$ can be approximated by cylinder sets since $\mcal{F} \subseteq S$ as $\mcal{F}$ is the smallest $\sigma$-algebra generated by combinations of cylinders, so contained in $S$ approximable by cylinders. Let $C$ be a cylinder set such that 
$$\nu_s(A) -\sum_{c \in C} \nu_{s}(c) \le \frac{\epsilon}{4\alpha} \text{ and }\nu_{s'}(A) -\sum_{c \in C} \nu_{s'}(c) \le\frac{\epsilon}{4\alpha}$$

We assume without loss of generality that all prefixes in $C$ are of length $h$. If not generalise all prefix's shorter than the maximum length $h$ by adding all possible suffixes to make length $h$.

\begin{align*}
&\sum_{c \in C} \nu_{s}(c) - \alpha  \sum_{c \in C} \nu_{s'}(c) 
\\ & \le \sum_{c \in C} g(c) (\nu_{s}(c) - \alpha \nu_{s'}(c)) 
\\ & \le K_\alpha(\dm^h)(\nu_{s}, \nu_{s'})
\end{align*}
$$g(c) = \begin{cases} 1 & \nu_{s}(c) - \alpha \nu_{s'}(c) > 0 \\ 0 & \text{ otherwise } \end{cases}$$

If $ K_\alpha(\dm^h)(\nu_{s}, \nu_{s'}) > \nu_s(A) - \alpha\nu_{s'}(A)$ then as $K_\alpha(\dm)(\nu_{s}, \nu_{s'}) \ge K_\alpha(\dm^h)(\nu_{s}, \nu_{s'})$ so $K_\alpha(\dm)(\nu_{s}, \nu_{s'}) - K_\alpha(\dm^h)(\nu_{s}, \nu_{s'}) \le \frac{\epsilon}{2} \le \epsilon$ already.

Otherwise
\begin{align*} 
& \nu_s(A) - \alpha\nu_{s'}(A) - K_\alpha(\dm^h)(\nu_{s}, \nu_{s'})
\\ & \le \nu_s(A) - \alpha\nu_{s'}(A) - (\sum_{c \in C} \nu_{s}(c) - \alpha  \sum_{c \in C} \nu_{s'}(c))
\\ &\le |\nu_s(A)- \sum_{c \in C} \nu_{s}(c)| + \alpha|\nu_{s'}(A) -\sum_{c \in C} \nu_{s'}(c)| 
\\ &\le \frac{\epsilon}{4\alpha} + \alpha\frac{\epsilon}{4\alpha} \le \frac{\epsilon}{2}
\end{align*}
\qed
\end{proof}

\subsubsection{Proof of Lemma~\ref{lem:appen:mainbound}}
\begin{proof}
To show $K_\alpha(\dm)(\nu_s,\nu_{s'}) \le \bd{}$ we show by induction on $h$:

\[
K(\dm^h)(\nu_s,\nu_{s'}) \le bd_\alpha(s,s)
\] 
for all $h$ and the result follows from Lemma~\ref{lem:indenough}.

\paragraph{\textbf{Base Case:} $h=0$}
In the base case, we show that $K_\alpha(\dm^0) =0 $. Thus it is necessarily smaller than any value of $bd_\alpha$.

$$K_\alpha(\dm^h)(\nu_s,\nu_{s'})  =  \sup_{\substack{f: A^\omega \to [0,1]\\ \Delta_\alpha(f(t),f(t')) \le \dm^h(t,t) \ \forall t,t' \in A^\omega}}\Delta_\alpha(\hat{f}(\nu_s), \hat{f}(\nu_{s'})) $$
But since $\dm^h = 0$, then 
$$K_\alpha(\dm^h)(\nu_s,\nu_{s'})  =  \sup_{\substack{f: A^\omega \to [0,1]\\ \Delta_\alpha(f(t),f(t')) \le 0 \ \forall t,t' \in A^\omega}}\Delta_\alpha(\hat{f}(\nu_s), \hat{f}(\nu_{s'})) $$
Then we know by Lemma~\ref{lem:expectzero} that $\Delta_\alpha(f(t),f(t')) \le 0$ for all $t,t' \in A^\omega$ then $\Delta_\alpha(\hat{f}(\nu_s), \hat{f}(\nu_{s'})) = 0$ and we obtain:
$$K_\alpha(\dm^h)(\nu_s,\nu_{s'})  =  \sup_{f: A^\omega \to [0,1]\ |\ \Delta_\alpha(f(t),f(t')) \le 0}\{0 \} = 0$$

Therefore $K_\alpha(\dm^h)(\nu_s,\nu_{s'}) =0 \le bd_\alpha(s,s')$

\paragraph{\textbf{Induction Case:}}

We assume $K_\alpha(\dm^h)(\nu_s,\nu_{s'}) \le bd_\alpha(s,s')$ and show $K_\alpha(\dm^{h + 1})(\nu_s,\nu_{s'}) \le bd_\alpha(s,s')$

Case 1:
$\lab(s) \ne \lab(s')$ then $bd_\alpha(s,s') = 1 \ge K_\alpha(\dm^{h+1})(\nu_s,\nu_{s'})$

Case 2: $\lab(s) = \lab(s') = a \in \alphabet$

The strategy is to consider a function $f: \alphabet{}^\omega \to [0,1]$ valid under $K_\alpha(\dm^{h+1})$. We will construct a function $g:S\to [0,1]$ valid under the expansion of $\bd{}$ with the same difference of expectations. This will allow us to conclude the sumpremum in $K_\alpha(\dm^{h+1}) \le \bd{}$

We show a function that is valid under $\dm^{h+1}$ to first find a function valid under $\dm^h$.

Consider an $f$ such that $\Delta_\alpha(f(t),f(t')) \le \dm^{h+1}(t,t')$ for all $t,t' \in A^\omega$ and let $f_a(t) = f(at)$.

\begin{property}\label{prop:faisvalid} $\Delta_\alpha(f_a(t), f_a(t')) \le \dm^{h}(t,t')$.
\end{property}
\begin{align*}
\Delta_\alpha(f_a(t), f_a(t')) &= \Delta_\alpha(f(at), f(at')) \tag{by defn of $f_a$}\\
&\le \dm^{h+1}(at,at') &= \begin{cases} 0 & at^{h} = at'^{h} \\ 1 & \text{otherwise} \end{cases} \tag{by assumption on $f$} \\
&= \dm^h(t,t') &=  \begin{cases} 0 & t^{h} = t'^{h} \\ 1 & \text{otherwise} \end{cases}
\end{align*}

Given this function we find a function $g$ which is valid in the restrictions in $K_\alpha(\bd{}) = \bd{}$ thus the difference of expectations of $g$ is below $\bd{}$.

Let $$g(s) = \hat{f}_a(\nu_s) =\int_{A^\omega} f_a d\nu_{s}$$

\begin{property}\label{prop:gvalid}For all $s,s'$: $\Delta_\alpha(g(s), g(s'))\le bd_\alpha(s,s')$
\end{property}
\begin{align*}
\Delta_\alpha(g(s), g(s')) &= \Delta_\alpha(\hat{f}_a(\nu_s),\hat{f}_a(\nu_s') \tag{by definition of g} \\
&\le K_\alpha(\dm^{h})(\nu_s,\nu_{s'}) 
\\ &= \sup_{\substack{f: A^\omega \to [0,1]\\ \Delta_\alpha(f(t),f(t')) \le \dm^h(t,t') \forall t,t' \in A^\omega}}\{\Delta_\alpha( \hat{f}(\nu_s),  \hat{f}(\nu_{s'}) ) \} \tag{$f_a$ is one such $f$, Property~\ref{prop:faisvalid}} \\ 
&\le bd_\alpha(s,s') \tag{induction assumption}
\end{align*} 

We show the difference of expectation of $g$ is equal to the difference of expectation of $f$:

\begin{property}\label{prop:sameexp}$\hat{f}(\nu_s)= \hat{g}(\mu_s)$
\end{property}
\begin{align*}
\hat{f}(\nu_s) &= \int_{A^\omega} f d\nu_s \\
&= \int_{A^\omega \text{ starting a}} f d\nu_s \tag{s can only yield $a$}\\
&= \int_{A^\omega} f_a \Sigma_{s_i \in S} \mu_s(s_i) d\nu_{s_i} \tag{$s\xrightarrow{a} \mu_s$} \\
&= \Sigma_{s_i \in S}  \mu_s(s_i)  \int_{A^\omega} f_a  d\nu_{s_i}\\ 
&= \Sigma_{s_i \in S}  \mu_s(s_i)  g(s_i) \tag{definition of $g$}\\ 
&= \hat{g}(\mu_s) \tag{expectation of $g$}\\ 
\end{align*}

This allows us to plant the difference of expectation of $g$ (and thus $f$) below the supremum defining the fixed point and thus below the fixed point.

\begin{property}\label{prop:main}
$\Delta_\alpha(\hat{f}(\nu_s), \hat{f}(\nu_{s'})) \le bd_\alpha(s,s') $
\end{property}
\begin{align*}
\Delta_\alpha(\hat{f}(\nu_s), \hat{f}(\nu_{s'})) &= \Delta_\alpha(\hat{g}(\mu_s), \hat{g}(\mu_{s'})) \tag{Property~\ref{prop:sameexp}}\\
&\le K_\alpha(bd_\alpha)(\mu_s, \mu_{s'}) = \sup_{\substack{f: S \to [0,1]\\ \Delta_\alpha(f(s),f(s')) \le bd_\alpha(s,s') \forall s,s' \in S}}\{\Delta_\alpha( \hat{f}(\mu_s),  \hat{f}(\mu_{s'}) ) \}  \tag{g is such an $f$, by Property~\ref{prop:gvalid}} \\ 
&\le bd_\alpha(s,s') \tag{$\Gamma_\alpha(bd_\alpha) \le bd_\alpha$ by definition}
\end{align*}

Since by Property~\ref{prop:main} we have $\Delta_\alpha(\hat{f}(\nu_s), \hat{f}(\nu_{s'})) \le bd_\alpha(s,s') $ for all $f$ satisfying $\dm^{h+1}$ then we also have it in the supremum.

Thus:

\begin{align*}
K_\alpha(\dm^{h + 1})(\nu_s,\nu_{s'}) &= \sup_{\substack{f: A^\omega \to [0,1] \\ \Delta_\alpha(f(t),f(t')) \le \dm^{h+1}(t,t') \forall t,t' \in A^\omega}}\Delta_\alpha( \hat{f}(\nu_s),  \hat{f}(\nu_{s'}) )
\\ & \le bd_\alpha(s,s')
\end{align*}

\end{proof}

\section{Dual form of Skewed Kantorovich (Proof of Lemma~\ref{thm:dualform})}

The primal form can be expressed as the following $LP$:
\[
\begin{array}{l}
\max_{\vec{f} \in [0,1]^n}  \vec{f} \cdot \vec{c} \\
\text{Subject to }\\
A \vec{f} \le \vec{m} \\\end{array}
\]
Where:
\[
\begin{array}{l}
\vec{c}_i = \mu_i -\alpha \upsilon_i \\
A_{(i,j), i} = 1\\
A_{(i,j), j} = -\alpha\\
A_{(i,i), i} = 1-\alpha
\end{array}
\]

We consider an equivalent LP in standard form, to achieve this we separate $\vec{f} \in [0,1]^n$ into two vectors $\vec{a},\vec{b} \in [0,1]^n$, with $\vec{a} = \vec{b}$ (intended to be equal to $\vec{f}$). We obtain the following linear program in standard form:
 
$$\max_{ \vec{a},\vec{b} \in R_+^n}(\mu, -\alpha\nu)\begin{pmatrix}
  \vec{a} \\ \vec{b}
  \end{pmatrix}$$
  Subject to:
  $$ \begin{pmatrix}
  A & A' \\
 I & -I \\
 -I & I \\
 I & 0 
  \end{pmatrix} \begin{pmatrix}
  \vec{a} \\ \vec{b}
  \end{pmatrix} \le  \begin{pmatrix}
   \vec{m} \\
   \vec{0} \\
   \vec{0} \\
   \vec{1}
  \end{pmatrix}
$$
Where \\
\[
\begin{array}{l}
A_{(i,j), i} = 1\\
A'_{(i,j), j} = -\alpha\\
I \text{ is the standard identity matrix } \\
0 \text{ is the zero matrix}
\end{array}
\]

$(A,A') \begin{pmatrix}
  \vec{a} \\ \vec{b}
  \end{pmatrix}\le \vec{m}$ specify the skewed distance must be less than $m$. $(I, - I)\begin{pmatrix}
  \vec{a} \\ \vec{b}
  \end{pmatrix} \le \vec{0}$ and $(-I, I)\begin{pmatrix}
  \vec{a} \\ \vec{b}
  \end{pmatrix} \le \vec{0}$ work to specify $\vec{a} = \vec{b}$. $(I, 0)\begin{pmatrix}
  \vec{a} \\ \vec{b}
  \end{pmatrix} \le \vec{1}$ specifies $\vec{a} \le \mathbf{1}$ (the condition $\vec{f} \in [0,1])$. 

This results in the following dual form following standard techniques.
\begin{align*}
&\min_{\omega \in [0,1]^{n\times n}, \gamma,\tau,\eta\in[0,1]^n} \sum_{i,j} \omega_{i,j} m(i,j) + \sum_i \eta_i\\
& \text{Subject to:} \\
&\forall i: \sum_{j} \omega_{i,j} -\gamma_i + \tau_i  + \eta_i \ge \mu_i \\
&\forall j: \sum_{i} \alpha\omega_{i,j} -\gamma_j + \tau_j \le \alpha\nu_j
\end{align*}

It would be possible to merge $\tau_i, \gamma_i \in [0,1]$ into a single variable in $[-1,1]$, but for consistency with the standard requirement of linear programs and the transportation problem intuition we keep them separate.

We can set the first equation to equality, since there is no sense over-satisfying it (reduce $\omega_{i,j}, \tau_i, \eta_i$ or increase $\gamma_i$ with no violation of other constraints). For parity with standard couplings we would like to remove $\gamma, \tau, \eta$ and $\le$. %

\end{document}